\def\spacingset#1{\def\baselinestretch{#1}\small\normalsize}
\newcommand{\n}{{\cal N} \, }
\newcommand{\ba}{\begin{array}}
\newcommand{\ea}{\end{array}}
\newcommand{\be}{\begin{displaymath}}
\newcommand{\ee}{\end{displaymath}}
\newcommand{\ben}{\begin{equation}}
\newcommand{\een}{\end{equation}}
\newcommand{\bena}{\begin{eqnarray}}
\newcommand{\eena}{\end{eqnarray}}
\newcommand{\beqa}{\begin{eqnarray*}}
\newcommand{\enqa}{\end{eqnarray*}}
\newcommand{\f}{\frac}
\newcommand{\bc}{\begin{center}}
\newcommand{\ec}{\end{center}}
\newcommand{\bi}{\begin{itemize}}
\newcommand{\ei}{\end{itemize}}
\newcommand{\benu}{\begin{enumerate}}
\newcommand{\eenu}{\end{enumerate}}
\newcommand{\bdes}{\begin{description}}
\newcommand{\edes}{\end{description}}
\newcommand{\bt}{\begin{tabular}}
\newcommand{\et}{\end{tabular}}
\newcommand{\Phibf}{\mbox{${\bf \Phi}$}}
\newcommand \nubf{\mbox{\boldmath$\nu$\unboldmath}}
\newcommand \bbf{{\bf b}}
\newcommand \cbf{{\bf c}}
\newcommand \dbf{{\bf d}}
\newcommand \fbf{{\bf f}}
\newcommand \gbf{{\bf g}}
\newcommand \hbf{{\bf h}}
\newcommand \pbf{{\bf p}}
\newcommand \ubf{{\bf u}}
\newcommand \vbf{{\bf v}}
\newcommand \wbf{{\bf w}}
\newcommand \xbf{{\bf x}}
\newcommand \ybf{{\bf y}}
\newcommand \zbf{{\bf z}}
\newcommand \Dbf{{\bf D}}
\newcommand \Fbf{{\bf F}}
\newcommand \Ibf{{\bf I}}
\newcommand \Pbf{{\bf P}}
\newcommand \Qbf{{\bf Q}}
\newcommand \Wbf{{\bf W}}
\newcommand \Cset{{\cal C}}
\newcommand{\Rnum}{{\mathbb R}}
\newcommand{\Nnum}{{\mathbb N}}
\newcommand{\Ncal}{{\cal N}}
\newcommand{\zebf}{{\bf 0}}
\newcommand{\circlambda}{\mbox{$\Lambda$
             \kern-.85em\raise1.5ex
             \hbox{$\scriptstyle{\circ}$}}\,}
\def\Pr{\mathop{\rm Pr}}
\newtheorem{Theorem}{Theorem}
\newtheorem{Lemma}{Lemma}
\newtheorem{Corollary}[Theorem]{Corollary}
\begin{document}
\title{\vspace{-0.5cm}
Compressed Sensing with General Frames via Optimal-dual-based
$\ell_1$-analysis }
\author{Yulong Liu, Tiebin Mi and Shidong Li}
\author{Yulong Liu\thanks{Yulong Liu is with
the Institute of Electronics, Chinese Academy of Sciences, Beijing,
$100190$, China. Email: \{yulong3.liu@gmail.com\}.}, Tiebin
Mi\thanks{Tiebin Mi is with the School of Information Sciences,
Renmin University of China, $100872$, China. Email:
\{mitiebin@gmail.com\}.},  and Shidong Li\thanks{Shidong Li is with
the Renmin University of China; and the Department of Mathematics,
San Francisco State University, San Francisco, CA 94132, USA. Email:
\{shidong@sfsu.edu.\}}}

\date{}

\maketitle
 \pssilent \setcounter{page}{1} \thispagestyle{empty}
\begin{abstract}
\spacingset{1.8} Compressed sensing with sparse frame
representations is seen to have much greater range of practical
applications than that with orthonormal bases. In such settings, one
approach to recover the signal is known as $\ell_1$-analysis. We
expand in this article the performance analysis of this approach by
providing a weaker recovery condition than existing results in the
literature. Our analysis is also broadly based on general frames and
alternative dual frames (as analysis operators). As one application
to such a general-dual-based approach and performance analysis, an
optimal-dual-based technique is proposed to demonstrate the
effectiveness of using alternative dual frames as analysis
operators. An iterative algorithm is outlined for solving the
optimal-dual-based $\ell_1$-analysis problem.  The effectiveness of
the proposed method and algorithm is demonstrated through several
experiments.

{\bf Keywords:} Compressed sensing, $\ell_1$-synthesis,
$\ell_1$-analysis, optimal-dual-based $\ell_1$-analysis,  frames,
dual frames, Bregman iteration, split Bregman iteration.

\end{abstract}
\spacingset{1.9}
\section{Introduction}
Compressed sensing concerns the problem of recovering a
high-dimensional sparse signal from a small number of linear
measurements
\begin{equation}
  \label{eq1} \ybf = \Phibf \fbf + \zbf,
\end{equation}
where $\Phibf$ is an $m \times n$ sensing matrix with $m \ll n$ and
$\zbf \in \Rnum^{m}$ is a noise term modeling measurement error. The
goal is to reconstruct the unknown signal $\fbf \in \Rnum^{n}$ based
on available measurements $\ybf \in \Rnum^{m}$. References on
compressed sensing have a long list, including, e.g.,
\cite{Candes2006b, Candes2006c, Candes2006, Donoho2006,
Donoho2006a}.

In standard compressed sensing scenarios, it is usually assumed that
$\fbf$ has a sparse (or nearly sparse) representation in an
orthonormal basis. However, a growing number of applications in
signal processing point to problems where $\fbf$ is sparse with
respect to an overcomplete dictionary or a frame rather than an
orthonormal basis, see, e.g., \cite{Mallat1993}, \cite{Chen2001}, \cite{Bruckstein2009}, and references therein. Examples
include, e.g., signal modeling in array signal processing
(oversampled array steering matrix), reflected radar and sonar
signals (Gabor frames), and images with curves (curvelets), etc. The
flexibility of frames is the key characteristic that empowers frames
to become a natural and concise signal representation tool.
Compressed sensing, with works including, e.g., \cite{Rauhut2008}, \cite{Candes2010}, that deals with sparse representations with
respect to frames becomes therefore particularly important.  In this
setting the signal $\fbf$ is expressed as $\fbf=\Dbf\xbf$ where
$\Dbf\in \Rnum^{n\times d}$ ($n < d$) is a matrix of frame vectors
(as columns) that are often rather coherent in applications, and
$\xbf\in \Rnum^{d}$ is a sparse coefficient vector.  The linear
measurements of $\fbf$ then become
\begin{equation} \label{eqn_fisADx}
\ybf = \Phibf \Dbf\xbf + \zbf.
\end{equation}

Since $\xbf$ is assumed sparse, a straightforward way of recovering
$\fbf$ from (\ref{eqn_fisADx}) is known as $\ell_{1}$-synthesis (or
synthesis-based method) \cite{Chen2001},
\cite{Elad2007}, \cite{Candes2010}. One first finds the sparsest possible coefficient
$\xbf$ by solving an $\ell_{1}$ minimization problem
\begin{equation}
\label{eq2} \hat{\xbf}=\underset {\tilde{\xbf}\in\Rnum^{d}}
{\textrm{argmin}} \Vert\tilde{\xbf}\Vert_{1} \ \ \ \  s.t. \ \
\Vert\ybf-\Phibf\Dbf\tilde{\xbf}\Vert_{2} \leq \epsilon,
\end{equation}
where $\Vert\xbf\Vert_{p} \ (p=1,2)$ denotes the standard
$\ell_{p}$-norm of the vector $\xbf$ and $\epsilon^2$ is a likely
upper bound on the noise power $\|\zbf\|_2^2$. Then the solution to
$\fbf$ is derived via a synthesis operation, i.e.,
$\hat{\fbf}=\Dbf\hat{\xbf}$.

Although empirical studies show that $\ell_{1}$-synthesis often
achieves good recovery results, little is known about the theoretical performance of this method.
The analytical results in \cite{Rauhut2008} essentially require that
the frame $\Dbf$ has columns that are extremely uncorrelated such
that $\Phibf\Dbf$ satisfies the requirements imposed by the
traditional compressed sensing assumptions. However, these
requirements are often infeasible when $\Dbf$ is highly coherent.
For example, consider a simple case in which $\Phibf\in
\Rnum^{m\times n}$ is a Gaussian matrix with i.i.d. entries, then
$\Phibf\sim\n(\zebf, \Ibf_{n}\otimes \Ibf_{m})$, where $\otimes$
denotes the Kronecker product and $\Ibf_{m}$ is an identity matrix
of the size $m$.  It is now well known that with very high
probability $\Phibf$ has small $s$-restricted isometry constant when
$m$ is on the order of $s\log(n/s)$ \cite{Candes2006b}, \cite{Baraniuk2008}.
Let us now examine $\Phibf\Dbf$.   It is not
hard to show that $\Phibf\Dbf\sim\n(\zebf,
\Dbf^{*}\Dbf\otimes\Ibf_{m})$, where $(\cdot)^{*}$ denotes the
transpose operation. Consequently, if $\Dbf$ is a coherent frame, $\Phibf\Dbf$ does not
generally satisfy the common restricted isometry property (RIP)
\cite{Rauhut2008}. Meantime, the mutual incoherence property (MIP)
\cite{Donoho2006a} may not apply either, as it is very hard for
$\Phibf\Dbf$ to satisfy the MIP as well when $\Dbf$ is highly
correlated.
\par
The analysis-based method, $\ell_{1}$-analysis, is an alternative to
$\ell_{1}$-synthesis, e.g., \cite{Elad2005},
\cite{Elad2007}, \cite{Candes2010}, which finds the estimate $\hat{\fbf}$ directly by
solving the problem
\begin{equation}\label{eq3}
  \hat{\fbf}=\underset {\tilde{\fbf}\in\Rnum^{n}} {\textrm{argmin}} \Vert\Dbf^{*}\tilde{\fbf}\Vert_{1} \ \ \ \  s.t. \ \
  \Vert\ybf-\Phibf\tilde{\fbf}\Vert_{2} \leq \epsilon.
 \end{equation}

When $\Dbf$ is a basis, the $\ell_{1}$-analysis and the
$\ell_{1}$-synthesis approaches are equivalent. However, when $\Dbf$
is an overcomplete frame, it was observed that there is a recovery
performance gap between them \cite{Chen2001}, \cite{Elad2007}. No
clear conclusion has been reached as to which approach is better
without specifying applications and associated data sets.
\par
A performance study of the $\ell_{1}$-analysis approach is just
recently given in \cite{Candes2010}. It was shown that \eqref{eq3}
recovers a signal $\hat{\fbf}$ with an error bound
\begin{equation}
  \label{eq4} \|\hat{\fbf}-\fbf\|_{2} \leq C_{0}\cdot \epsilon +
  C_{1}\cdot\f{\|\Dbf^{*}\fbf-(\Dbf^{*}\fbf)_{s}\|_{1}}{\sqrt{s}},
\end{equation}
provided that $\Phibf$ obeys a $\Dbf$-RIP condition (see
\eqref{eq8}) with $\delta_{2s}<0.08$, where the columns of $\Dbf$
form a Parseval frame and $(\Dbf^{*}\fbf)_{s}$ is a vector
consisting of the largest $s$ entries of $\Dbf^{*}\fbf$ in
magnitude. It follows from \eqref{eq4} that if $\Dbf^{*}\fbf$ has
rapidly decreasing coefficients, then the solution to \eqref{eq3} is
very accurate. In particular, if the measurements of $\fbf$ are
noiseless and $\Dbf^{*}\fbf$ is exactly $s$-sparse, then $\fbf$ is
recovered exactly.

Indeed, $\ell_{1}$-analysis shows a promising performance in
applications where both the columns of the Gram matrix
$\Dbf^{*}\Dbf$ and the coefficient vector $\xbf$ are reasonably
sparse,  see e.g., \cite{Elad2007}, \cite{CaiJ2009}, \cite{Candes2010}.
In other words, as long as the frame coefficient
vector $\Dbf^{*}\fbf$ is sensibly sparse, $\ell_{1}$-analysis can be
the right method to use.
\par
However, the $\ell_1$-analysis approach of (\ref{eq3}) is certainly
not flawless.  That $\fbf$ is sparse in terms of $\Dbf$ does not
imply $\Dbf^{*}\fbf$ is necessarily sparse. In fact, as the
canonical dual frame expansion in the case of Parseval frames,
$\Dbf^{*}\fbf=\Dbf^{*}\Dbf\xbf$ has the minimum $\ell_{2}$ norm by
the frame property, see, e.g., \cite{Christensen2003} and is usually
fully populated which is also pointed out in \cite{Rauhut2008}.

For a given signal $\fbf$, there are infinitely many ways to
represent $\fbf$ by the columns of $\Dbf$. By the spirit of frame
expansions, all coefficients of a frame expansion of $\fbf$ in
$\Dbf$ should correspond to some dual frame of $\Dbf$. It is not
hard to imagine that there should be some dual frame of $\Dbf$,
denoted by $\widetilde{\Dbf}$, such that $\widetilde{\Dbf}^*\fbf$ is
sparser than $\Dbf^{*}\fbf$. Furthermore, if a similar error bound
(just like \eqref{eq4}) holds for arbitrary dual frame analysis
operators, then one may expect a better recovery performance by
taking some ``proper'' dual frame of $\Dbf$ as the analysis
operator. Motivated by this observation, we consider a
\textit{general-dual-based $\ell_1$-analysis} as follows:
\begin{equation}
 \label{generall1analysis}
 \hat{\fbf}=\underset {\ \tilde{\fbf}\in\Rnum^{n}} {\textrm{argmin}}
 \Vert\widetilde{\Dbf}^{*}\tilde{\fbf}\Vert_{1} \ \ \ \  s.t. \ \
  \Vert\ybf-\Phibf\tilde{\fbf}\Vert_{2} \leq \epsilon,
\end{equation}
where columns of the analysis operator $\widetilde{\Dbf}$ form a
general (and any) dual frame of $\Dbf$.

In this article, we first present a performance analysis for the
general-dual-based $\ell_1$-analysis approach
\eqref{generall1analysis}. It turns out that a recovery error bound
exists entirely similar to that of \eqref{eq4}. More precisely,
under suitable conditions, \eqref{generall1analysis} recovers a
signal $\hat{\fbf}$ with an error bound
\begin{equation}
  \label{errorboundforgenernalL1analysis} \|\hat{\fbf}-\fbf\|_{2} \leq C_{0}\cdot \epsilon +
  C_{1}\cdot\f{\|\widetilde{\Dbf}^{*}\fbf-(\widetilde{\Dbf}^{*}\fbf)_{s}\|_{1}}{\sqrt{s}}.
\end{equation}
We show that sufficient conditions which ratify a recovery
performance estimation \eqref{errorboundforgenernalL1analysis}
depend not only on the $\Dbf$-RIP of $\Phibf$, but also on the ratio
of frame bounds. By utilizing the Shifting Inequality
\cite{Cai2010a}, the recovery condition on the sensing matrix is
improved from $\delta_{2s}<0.08$ \cite{Candes2010} to
$\delta_{2s}<0.2$ under the same assumptions that columns of $\Dbf$
form a Parseval frame and $\widetilde{\Dbf}=\Dbf$.

The important question then is how to choose some appropriate dual
frame $\widetilde{\Dbf}$ such that $\widetilde{\Dbf}^*\fbf$ is as
sparse as possible.  One approach as we propose here is by the
method of \textit{optimal-dual-based $\ell_1$-analysis}:
\begin{equation}
 \label{eq5}
 \hat{\fbf}=\underset {\Dbf\widetilde{\Dbf}^* = \Ibf, \ \tilde{\fbf}\in\Rnum^{n}} {\textrm{argmin}}
 \Vert\widetilde{\Dbf}^{*}\tilde{\fbf}\Vert_{1} \ \ \ \  s.t. \ \
  \Vert\ybf-\Phibf\tilde{\fbf}\Vert_{2} \leq \epsilon,
\end{equation}
where the optimization is not only over the signal space but also
over all dual frames of $\Dbf$. Note that the class of all dual
frames for $\Dbf$ is given by \cite{Li1995} (see
\eqref{generaldualII})
\begin{equation} \label{generaldual}
 \widetilde{\Dbf}  = (\Dbf\Dbf^*)^{-1}\Dbf +
 \Wbf^*(\Ibf_d-\Dbf^*(\Dbf\Dbf^*)^{-1}\Dbf)= \bar{\Dbf} + \Wbf^*\Pbf,
\end{equation}
where $\bar{\Dbf}\equiv(\Dbf\Dbf^*)^{-1}\Dbf$ denotes the canonical
dual frame of $\Dbf$, $\Pbf\equiv \Ibf_d
-\Dbf^*(\Dbf\Dbf^*)^{-1}\Dbf$ is the orthogonal projection onto the
null space of $\Dbf$, and $\Wbf \in \Rnum^{d \times n}$ is an
arbitrary matrix. Plug \eqref{generaldual} into \eqref{eq5}, we
obtain
\begin{equation}
 \label{generaldualbasedL1analysis}
 \hat{\fbf}=\underset {\tilde{\fbf}\in\Rnum^{n},\  \gbf\in\Rnum^{d}} {\textrm{argmin}}
 \Vert\bar{\Dbf}^{*}\tilde{\fbf} + \Pbf \gbf\Vert_{1} \ \ \ \  s.t. \ \
  \Vert\ybf-\Phibf\tilde{\fbf}\Vert_{2} \leq \epsilon,
\end{equation}
where we have used the fact that when $\tilde{\fbf}\neq \zebf$,
$\gbf\equiv\Wbf\tilde{\fbf}$ can be any vector in $\Rnum^{d}$ due to
the fact that $\Wbf$ is free.

Clearly, the solution to \eqref{generaldualbasedL1analysis}
definitely corresponds to that of \eqref{generall1analysis} with
some optimal dual frame, say $\widetilde{\Dbf}_{\text{o}}$ as the
analysis operator. The optimality here is in the sense that
$\|\widetilde{\Dbf}_{\text{o}}^*\hat{\fbf}\|_1$ achieves the
smallest $\|\widetilde{\Dbf}^*\tilde{\fbf}\|_1$ in value among all
dual frames $\widetilde{\Dbf}$ of $\Dbf$ and feasible signals
$\tilde{\fbf}$ satisfied the constraint in
\eqref{generaldualbasedL1analysis}. When $\fbf$ is sparse with
respect to $\Dbf$, it is highly desirable that the corresponding
optimal dual frame should be effective in sparsfying the true signal
$\fbf$. It then follows from \eqref{errorboundforgenernalL1analysis}
that an accurate recovery of $\fbf$ may be achieved by the solution
of \eqref{generaldualbasedL1analysis}. Indeed, we have seen that the
signal recovery via \eqref{generaldualbasedL1analysis} is much more
effective than that of the $\ell_1$-analysis approach \eqref{eq3}
which uses the canonical dual frame as the analysis operator.

Finally, we also develop an iterative algorithm for solving the
optimal-dual-based $\ell_1$-analysis problem. The proposed algorithm
is based on the split Bregman iteration introduced in
\cite{Goldstein2009}. Our numerical results show that the proposed
algorithm is very fast when properly chosen parameter values are
used.

This paper is organized as follows.  Section \ref{section2} contains preliminary discussions about compressed sensing with general frames.  Performance studies for the general-dual-based
$\ell_1$-analysis approach are presented in section \ref{section3}. In section
\ref{section4}, an optimal-dual-based $\ell_1$-analysis approach and a corresponding iterative algorithm are discussed. In section \ref{section5}, results of numerical experiments are presented to illustrate
the effectiveness of signal recovery via the optimal-dual-based
$\ell_1$-analysis approach.  Conclusion remarks are given in
section \ref{section6}.  Included in the appendix is on the basics of the Bregman iteration which is beneficial to the discussion of the algorithm presented in section \ref{section4}.

\section{Preliminaries} \label{section2}
\subsection{Preliminaries for Compressed Sensing}
Let $\xbf \in \Rnum^{d}$ be a column vector. The \textit{support} of
$\xbf$ is defined as $\textrm{supp}(\xbf) = \{i:\xbf_{i}\neq 0,
i=1,\ldots, d\}$. For $s \in \Nnum$, a vector $\xbf$ is said to be
$s$-\textit{sparse} if $\vert \textrm{supp}(\xbf)\vert\leq s$. For
$T \subseteq \{1, \ldots, d\}$, $\xbf_{T}$ stands for a $|T|$-long
vector taking entries from $\xbf$ indexed by $T$. Similarly,
$\Dbf_{T}$ is the submatrix of $\Dbf$ restricted to the columns
indexed by $T$. We shall write $\Dbf_{T}^{*} \equiv (\Dbf_{T})^{*}$,
and use the standard notation $\Vert\xbf\Vert_{q}$ to denote the
$\ell_{q}$-norm of $\xbf$
$$\|\xbf\|_{q}=\left\{ \begin{array}{cc}
\left(\sum_{i=1}^{n}|\xbf_{i}|^{q}\right)^{1/q} & 1\leq q < \infty,
\\ \underset {1\leq i \leq n} {\textrm{max}} |\xbf_{i}|  & q = \infty.
\end{array}            \right.$$

For an $m\times n$ measurement matrix $\Phibf$, we say that $\Phibf$
obeys the restricted isometry property \cite{Candes2005} with
constant $\gamma_{s}\in (0,1)$ if
\begin{equation}\label{eq7}
(1-\gamma_{s})\Vert \xbf \Vert_{2}^{2}\leq \Vert \Phibf\xbf
\Vert_{2}^{2} \leq(1+\gamma_{s})\Vert \xbf \Vert_{2}^{2}
\end{equation}
holds for all $s$-sparse signals $\xbf$. We say that $\Phibf$
satisfies the restricted isometry property adapted to $\Dbf$
(abbreviated $\Dbf$-RIP) \cite{Candes2010} with constant
$\delta_{s}\in (0,1)$ if
\begin{equation}\label{eq8}
(1-\delta_{s})\Vert \vbf \Vert_{2}^{2}\leq \Vert \Phibf\vbf
\Vert_{2}^{2} \leq(1+\delta_{s})\Vert \vbf \Vert_{2}^{2}
\end{equation}
holds for all $\vbf\in \Sigma_{s}$, where $\Sigma_{s}$ is the union
of all subspaces spanned by all subsets of $s$ columns of $\Dbf$.
Obviously, $\Sigma_{s}$ is the image under $\Dbf$ for all $s$-sparse
vectors.  Similar to $\gamma_{s}$, it is easy to see $\delta_{s}$ is
monotone, i.e., $\delta_{s}\leq \delta_{s_{1}}$, if $s\leq s_{1}\leq
d$.

The $\Dbf$-RIP condition is also validated in a number of
discussions. For instance, it was shown in \cite{Candes2010}
  that suppose an $m \times n$ matrix $\Phibf$ obeys a concentration
  inequality of the type
\begin{equation}
  \label{ConcentrationInequality}
  \Pr\left( \left|\|\Phibf\nubf\|_2^2 - \|\nubf\|_2^2 \right| \geq
  \delta \|\nubf\|_2^2 \right) \leq
  c e^{-\gamma \delta^2 m}, \ \ \ \ \delta \in (0, 1)
\end{equation}
for any fixed $\nubf \in \Rnum^n$, where $\gamma$, $c$ are some
positive constants, then $\Phibf$ will satisfy the $\Dbf$-RIP
(associated with some $\Dbf$-RIP constant) with overwhelming
probability provided that $m$ is on the order of $s\log(d/s)$. Many
types of random matrices satisfy \eqref{ConcentrationInequality},
some examples include matrices with Gaussian, subgaussian, or
Bernoulli entries. Very recently, it has also been shown in
\cite{Krahmer2011} that randomizing the column signs of any matrix
that satisfies the standard RIP results in a matrix which satisfies
the Johnson-Lindenstrauss lemma \cite{Johnson1984}. Such a matrix
would then satisfy the $\Dbf$-RIP via
\eqref{ConcentrationInequality}. Consequently, partial Fourier
matrix (or partial circulant matrix) with randomized column signs
will satisfy the $\Dbf$-RIP since these matrices are known to
satisfy the RIP.

\subsection{Preliminaries for Frame Theory}
A set of vectors $\{\dbf_{k}\}_{k\in I}$ in
$\Rnum^{n}$ is a \textit{frame} of $\Rnum^{n}$ if there exist
constants $0<A\leq B<\infty$ such that
\begin{equation}
  \label{eq9}
  \forall \  \fbf \in \Rnum^{n}, \ \ \ \ A\|\fbf\|_{2}^{2} \leq \sum_{k\in I}|\langle\fbf, \dbf_{k}
  \rangle|^{2} \leq B\|\fbf\|_{2}^{2},
\end{equation}
where numbers $A$ and $B$ are called \textit{frame bounds}. A frame
that is \textit{not} a basis is said to be \textit{overcomplete} or
\textit{redundant}. More details about frames can be found in e.g.,
\cite{Christensen2003}, \cite{Han2007}, \cite{Heil1989}. In the
matrix form, \eqref{eq9} can be reformulated as
\begin{equation}
  \label{eq10}
   \forall \  \fbf \in \Rnum^{n}, \ \ \ \ A\|\fbf\|_{2}^{2} \leq \fbf^{*}(\Dbf\Dbf^{*})\fbf \leq B\|\fbf\|_{2}^{2},
\end{equation}
where $\{\dbf_{k}\}_{k\in I}$ are the columns of $\Dbf$. When
$A=B=1$, the columns of $\Dbf$ form a Parseval frame and
$\Dbf\Dbf^{*}=\Ibf$. A frame $\{\widetilde{\dbf}_{k}\}_{k\in I}$ is
an \textit{alternative dual frame} of $\{\dbf_{k}\}_{k\in I}$ if
\begin{equation}\label{eq11}
  \forall \  \fbf \in \Rnum^{n}, \ \ \ \ \fbf = \sum_{k\in I}\langle\fbf, \widetilde{\dbf}_{k}
  \rangle\  \dbf_{k} = \sum_{k\in I}\langle\fbf, {\dbf}_{k}
  \rangle\  \widetilde{\dbf}_{k}.
\end{equation}
For every given overcomplete frame $\{\dbf_{k}\}_{k\in I}$, there
are infinite many dual frames $\{\widetilde{\dbf}_{k}\}_{k\in I}$
such that \eqref{eq11} holds \cite{Li1995}. More precisely, the
class of all dual frames for $\Dbf$ is given by the columns of
$\widetilde{\Dbf}$
\begin{equation} \label{generaldualII}
 \widetilde{\Dbf}  = (\Dbf\Dbf^*)^{-1}\Dbf +
 \Wbf^*(\Ibf_d-\Dbf^*(\Dbf\Dbf^*)^{-1}\Dbf)
 = (\Dbf\Dbf^*)^{-1}\Dbf + \Wbf^*\Pbf.
\end{equation}
Note that $\Dbf\widetilde{\Dbf}^* = \Ibf$. When $\Wbf = \zebf$,
$\widetilde{\Dbf}$ reduces to the canonical dual frame
$\bar{\Dbf}=(\Dbf\Dbf^{*})^{-1}\Dbf$. The lower and upper frame
bound of $\bar{\Dbf}$ is given by $1/B$ and $1/A$, respectively. For
$\fbf \in \Rnum^{n}$, the canonical coefficients
$\bar{\Dbf}^{*}\fbf$ have the minimum $\ell_{2}$ norm, i.e.,
$\|\bar{\Dbf}^*\fbf\|_2 = \underset {\tilde{\xbf}:
\Dbf\tilde{\xbf}=\fbf} {\textrm{min}} \|\tilde{\xbf}\|_2$.

\subsection{The Shifting Inequality}

We now briefly discuss the Shifting Inequality \cite{Cai2010a},
which is a very useful tool performing finer estimation of
quantities involving $\ell_{1}$ and $\ell_{2}$ norms. A different
proof of this inequality is also given in \cite{Foucart2010}.
\begin{Lemma} \label{lem1}
(Shifting Inequality {\rm \cite{Cai2010a}}) Let $q$, $r$ be positive
integers satisfying $q\leq 3r$. Then any nonincreasing sequence of
real numbers $a_{1}\geq\cdots\geq a_{r}\geq b_{1}\geq\cdots\geq
b_{q}\geq c_{1}\geq\cdots\geq c_{r}\geq 0$ satisfies
\begin{equation}
\label{eq13}\sqrt{\sum_{i=1}^{q}b_{i}^{2}+\sum_{i=1}^{r}c_{i}^{2}}\leq
\f{\sum_{i=1}^{r}a_{i}+\sum_{i=1}^{q}b_{i}}{\sqrt{q+r}}.
\end{equation}
\end{Lemma}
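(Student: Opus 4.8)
The plan is to square the inequality \eqref{eq13} and establish the equivalent algebraic form
\begin{equation}
(q+r)\Big(\sum_{i=1}^{q} b_i^2 + \sum_{i=1}^{r} c_i^2\Big) \leq \Big(\sum_{i=1}^{r} a_i + \sum_{i=1}^{q} b_i\Big)^2 .
\end{equation}
The first step is a worst-case reduction in the free variables. For a fixed nonincreasing block $b_1 \geq \cdots \geq b_q$, the left-hand side above is nondecreasing in each $c_i$ while the right-hand side is nondecreasing in each $a_i$; since the ordering constraints only demand $a_i \geq b_1$ and $c_i \leq b_q$, the inequality is hardest exactly when $a_1 = \cdots = a_r = b_1$ and $c_1 = \cdots = c_r = b_q$. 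Thus it suffices to prove
\begin{equation}
(q+r)\Big(\sum_{i=1}^{q} b_i^2 + r\,b_q^2\Big) \leq \Big(r\,b_1 + \sum_{i=1}^{q} b_i\Big)^2
\tag{$\star$}
\end{equation}
for every $b_1 \geq \cdots \geq b_q \geq 0$.

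The second step turns $(\star)$ into a finite check. Both sides are homogeneous of degree two, so, writing $M(b) = r\,b_1 + \sum_{i=1}^{q} b_i$ and $N(b) = \sum_{i=1}^{q} b_i^2 + r\,b_q^2$, I would normalize $M(b)=1$ and recast $(\star)$ as the claim $\max_{b\in K} N(b) \leq 1/(q+r)$ over the polytope $K = \{\, b : b_1 \geq \cdots \geq b_q \geq 0,\ M(b)=1 \,\}$. The key observation is that $N$ is convex, being a nonnegatively weighted sum of squares, and a convex function on a compact polytope attains its maximum at a vertex. The order cone $\{\, b_1 \geq \cdots \geq b_q \geq 0 \,\}$ is generated by the step vectors $g_j = (\underbrace{1,\dots,1}_{j},0,\dots,0)$ for $j=1,\dots,q$, so the vertices of $K$ are the normalizations of the $g_j$, and by scale invariance it is enough to evaluate the ratio $N(b)/M(b)^2$ at each $g_j$.

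A direct computation gives $N(g_j)/M(g_j)^2 = j/(r+j)^2$ for $j<q$ and $(q+r)/(q+r)^2 = 1/(q+r)$ for $j=q$. Since $\sup_{t>0} t/(r+t)^2 = 1/(4r)$, attained at $t=r$, every $j<q$ term is at most $1/(4r)$. This is precisely where the hypothesis $q \leq 3r$ is used: it is equivalent to $q+r \leq 4r$, i.e. to $1/(4r) \leq 1/(q+r)$, so the endpoint $j=q$ dominates and $\max_{b\in K} N(b) = 1/(q+r)$, which is $(\star)$. I expect the crux to be the second step rather than the algebra: a naive application of Cauchy--Schwarz to the left-hand side loses the factor $\sqrt{q+r}$ and gives no access to the constant $3$, whereas recognizing $(\star)$ as the maximization of a convex quadratic over an order polytope collapses the problem to the finitely many step vectors and makes the threshold $q \leq 3r$ appear transparently from the elementary bound $\sup_{t>0} t/(r+t)^2 = 1/(4r)$.
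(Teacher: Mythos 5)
Your proposal is correct, but there is nothing in the paper to compare it against: the paper states the Shifting Inequality as an imported result, citing \cite{Cai2010a} for the proof and \cite{Foucart2010} for an alternative one, and never proves it itself. Judged on its own, your argument is sound and complete. The worst-case reduction is valid (for fixed $b$, the squared left side is nondecreasing in each $c_i$ and the squared right side is nondecreasing in each $a_i$, and the extremal choices $a_i=b_1$, $c_i=b_q$ remain admissible); the normalized feasible set $K$ is indeed a compact simplex, since the layer-cake decomposition $b=\sum_{j=1}^{q}(b_j-b_{j+1})g_j$ (with $b_{q+1}=0$) exhibits the monotone cone as the simplicial cone generated by the step vectors $g_j$, so $K=\mathrm{conv}\{g_j/M(g_j)\}$; and Bauer's maximum principle (a convex function on a compact convex set attains its maximum at an extreme point) legitimately collapses the problem to the vertex evaluations $j/(r+j)^2$ for $j<q$ and $1/(q+r)$ for $j=q$. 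The final arithmetic is right: $(r+j)^2\geq 4rj$ gives $j/(r+j)^2\leq 1/(4r)$, and $q\leq 3r$ is exactly $1/(4r)\leq 1/(q+r)$, so the vertex $g_q$ dominates. This is a genuinely attractive route, more conceptual than computational: it makes transparent both where the hypothesis $q\leq 3r$ enters and why the inequality is tight for constant sequences (the maximizing vertex $g_q$). The one step you state without proof, that the extreme points of $K$ are the normalized step vectors, is standard, but a one-line appeal to the decomposition $b=\sum_j(b_j-b_{j+1})g_j$ would make the write-up self-contained.
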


\section{Sufficient Conditions for General-dual-based
$\ell_1$-analysis} \label{section3}
In this section, we establish theoretical results for the
general-dual-based $\ell_1$-analysis approach
\eqref{generall1analysis} in which the analysis operator can be any
dual frame of $\Dbf$. Our main result is that, under suitable
conditions, the solution to \eqref{generall1analysis} is very
accurate provided that $\widetilde{\Dbf}^*\fbf$ has rapidly
decreasing coefficients. We present two results with slightly
different emphasises. They are, respectively, when the analysis
operator is an alternative dual frame and when the analysis operator
is the canonical dual frame.

\subsection{The Case of Alternative Dual Frames}
\begin{Theorem}\label{thm1}
  Let $\Dbf$ be a general frame of $\Rnum^{n}$ with frame bounds $0<A\leq B<\infty$.  Let $\widetilde{\Dbf}$ be an alternative
  dual frame of $\Dbf$ with frame bounds $0<\widetilde{A}\leq \widetilde{B}<\infty$, and let $\rho=s/b$. Suppose
  \begin{equation}
    \label{eq14} \left(1-\sqrt{\rho B \widetilde{B}}\right)^2 \cdot \delta_{s+a} +
\rho B \widetilde{B}\cdot\delta_{b} < 1 - 2\sqrt{\rho B
\widetilde{B}}
  \end{equation}
holds for some positive integers $a$ and $b$ satisfying $0< b-a\leq
3a$. Then the solution $\hat{\fbf}$ to \eqref{generall1analysis}
satisfies
  \begin{equation}
    \label{eq15} \Vert \hat{\fbf}-\fbf \Vert_{2} \leq C_{0}\cdot\epsilon +
    C_{1}\cdot\f{\Vert\widetilde{\Dbf}^{*}\fbf-(\widetilde{\Dbf}^{*}\fbf)_{s}\Vert_{1}}{\sqrt{s}},
  \end{equation}
 where $C_{0}$ and $C_{1}$ are some constants and $(\widetilde{\Dbf}^{*}\fbf)_{s}$
 denotes the vector consisting the largest $s$ entries of
 $\widetilde{\Dbf}^{*}\fbf$ in magnitude.
\end{Theorem}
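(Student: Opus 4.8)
The plan is to follow the by-now-standard cone-plus-tube-plus-$\Dbf$-RIP scheme of compressed sensing, but conducted in the coefficient domain and with every signal-domain/coefficient-domain conversion charged to the frame bounds. Write $\hbf=\hat{\fbf}-\fbf$ for the recovery error and set $\xbf=\widetilde{\Dbf}^*\hbf\in\Rnum^{d}$. The crucial structural fact is the duality relation $\Dbf\widetilde{\Dbf}^*=\Ibf$, which gives $\hbf=\Dbf\widetilde{\Dbf}^*\hbf=\Dbf\xbf$ and so lets me pass freely between $\hbf$ and its coefficient vector $\xbf$, with $\hbf=\sum_j\Dbf_{T_j}\xbf_{T_j}$ under any partition of the coefficient index set. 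Since $\fbf$ is feasible for \eqref{generall1analysis} (assuming $\|\zbf\|_2\le\epsilon$) and $\hat{\fbf}$ is optimal, I first record the \emph{tube constraint} $\|\Phibf\hbf\|_2\le 2\epsilon$ from the two data-fidelity inequalities, and the \emph{cone constraint} $\|\widetilde{\Dbf}^*\hat{\fbf}\|_1\le\|\widetilde{\Dbf}^*\fbf\|_1$ from optimality. Letting $T_0$ index the $s$ largest-magnitude entries of $\widetilde{\Dbf}^*\fbf$, the usual splitting of the cone constraint yields $\|\xbf_{T_0^c}\|_1\le\|\xbf_{T_0}\|_1+2\|\widetilde{\Dbf}^*\fbf-(\widetilde{\Dbf}^*\fbf)_s\|_1$, which is where the tail term in \eqref{eq15} originates.

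Next I would control the tail of $\xbf$ in $\ell_2$. Partition $T_0^c$ by decreasing magnitude of $\xbf$ into a first block $T_1$ of size $a$ followed by blocks of size $b$, and apply the Shifting Inequality (Lemma \ref{lem1}) with $r=a$ and $q=b-a$ — the hypothesis $q\le 3r$ is exactly the stated constraint $0<b-a\le 3a$ — to bound $\sum_{j\ge 2}\|\xbf_{T_j}\|_2$ by a multiple of $\frac{1}{\sqrt{b}}\|\xbf_{T_0^c}\|_1$. Feeding in the cone constraint and then $\|\xbf_{T_0}\|_1\le\sqrt{s}\,\|\xbf_{T_0}\|_2$ converts $\sqrt{s/b}$ into $\sqrt{\rho}$, while the two frame-bound conversions $\|\Dbf_{T_j}\xbf_{T_j}\|_2\le\sqrt{B}\,\|\xbf_{T_j}\|_2$ (the upper frame bound of $\Dbf$ applied to a sub-collection of columns) and $\|\xbf_{T_0}\|_2\le\|\xbf\|_2=\|\widetilde{\Dbf}^*\hbf\|_2\le\sqrt{\widetilde{B}}\,\|\hbf\|_2$ (the upper frame bound of $\widetilde{\Dbf}$) together manufacture the governing factor $\sqrt{\rho B\widetilde{B}}$. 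The net effect is an estimate of the shape $\sum_{j\ge 2}\|\Dbf_{T_j}\xbf_{T_j}\|_2\le\sqrt{\rho B\widetilde{B}}\,\|\hbf\|_2+\sqrt{B}\cdot(\text{tail})$.

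The core of the argument is the restricted-isometry estimate on the head. With $T_{01}=T_0\cup T_1$ of size $s+a$, I would start from the identity $\|\Phibf\Dbf_{T_{01}}\xbf_{T_{01}}\|_2^2=\langle\Phibf\Dbf_{T_{01}}\xbf_{T_{01}},\,\Phibf\hbf\rangle-\sum_{j\ge 2}\langle\Phibf\Dbf_{T_{01}}\xbf_{T_{01}},\,\Phibf\Dbf_{T_j}\xbf_{T_j}\rangle$, lower-bound the left side by $(1-\delta_{s+a})\|\Dbf_{T_{01}}\xbf_{T_{01}}\|_2^2$ through the $\Dbf$-RIP \eqref{eq8}, bound the first inner product by the tube constraint, and bound the cross terms through the near-orthogonality form of the $\Dbf$-RIP together with the tail estimate of the previous paragraph. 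Combining these and then using $\|\hbf\|_2\le\|\Dbf_{T_{01}}\xbf_{T_{01}}\|_2+\sum_{j\ge 2}\|\Dbf_{T_j}\xbf_{T_j}\|_2$ produces a single inequality in which $\|\hbf\|_2$ appears on both sides; the coefficient of $\|\hbf\|_2$ on the right collects precisely the terms $(1-\sqrt{\rho B\widetilde{B}})^2\delta_{s+a}$ and $\rho B\widetilde{B}\,\delta_b$, so that condition \eqref{eq14} is exactly what forces this coefficient below $1-2\sqrt{\rho B\widetilde{B}}$ and lets me solve for $\|\hbf\|_2$ with positive, finite constants $C_0,C_1$. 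I expect the main obstacle to be this last step: keeping the frame-bound conversions perfectly consistent so that the squared factor $(1-\sqrt{\rho B\widetilde{B}})^2$ and the linear factor $\rho B\widetilde{B}$ emerge exactly as written, and in particular arranging the cross-term bookkeeping so that only $\delta_{s+a}$ and $\delta_b$ — never a larger index such as $\delta_{s+a+b}$ — survive into the final recovery condition.
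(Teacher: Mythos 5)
Your scaffolding matches the paper's proof: the cone constraint from optimality, the tube constraint $\|\Phibf\hbf\|_2\le 2\epsilon$, the partition $T_0,T_1,T_2,\dots$ with $|T_1|=a$ and $|T_i|=b$, the Shifting Inequality applied with $r=a$, $q=b-a$, and the two frame-bound conversions that produce $\sqrt{\rho B\widetilde{B}}$ are exactly the paper's Steps 1 and 3 (its \eqref{eq16}--\eqref{eq18} and \eqref{eq22}). The gap is in your core step, and it is precisely the obstacle you flag at the end without resolving. You propose to expand $\|\Phibf\Dbf_{T_{01}}\xbf_{T_{01}}\|_2^2$ into inner products and control the cross terms $\langle\Phibf\Dbf_{T_{01}}\xbf_{T_{01}},\Phibf\Dbf_{T_j}\xbf_{T_j}\rangle$ by ``the near-orthogonality form of the $\Dbf$-RIP.'' That tool fails here for two reasons. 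First, the polarization identity behind near-orthogonality requires the $\Dbf$-RIP to hold for the sums $\Dbf_{T_{01}}\xbf_{T_{01}}\pm\Dbf_{T_j}\xbf_{T_j}$, which lie in $\Sigma_{s+a+b}$; this route therefore necessarily drags in $\delta_{s+a+b}$, which you correctly note is fatal for obtaining condition \eqref{eq14} as stated. Second, and specific to the redundant-frame setting, disjointness of the coefficient supports $T_{01}$ and $T_j$ does \emph{not} make the signal-domain vectors $\Dbf_{T_{01}}\xbf_{T_{01}}$ and $\Dbf_{T_j}\xbf_{T_j}$ orthogonal when $\Dbf$ is coherent, so polarization leaves behind the true inner products $\langle\Dbf_{T_{01}}\xbf_{T_{01}},\Dbf_{T_j}\xbf_{T_j}\rangle$, which can be as large as the product of the norms and do not cancel. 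Near-orthogonality thus buys nothing over plain Cauchy--Schwarz here while costing a strictly larger RIP index.

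The fix is to avoid inner products altogether: bound
\begin{equation*}
\|\Phibf\hbf\|_2 \;\geq\; \|\Phibf\Dbf_{T_{01}}\widetilde{\Dbf}_{T_{01}}^{*}\hbf\|_2 \;-\; \sum_{i\geq 2}\|\Phibf\Dbf_{T_i}\widetilde{\Dbf}_{T_i}^{*}\hbf\|_2
\end{equation*}
by the reverse triangle inequality, and apply the $\Dbf$-RIP \eqref{eq8} to each block separately: $\sqrt{1-\delta_{s+a}}$ for the head (a vector in $\Sigma_{s+a}$) and $\sqrt{1+\delta_b}$ for each tail block (vectors in $\Sigma_b$). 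Combined with the tube constraint and your tail estimate, this yields the paper's \eqref{eq21}, after which your final combination goes through: the inequality $K_1\|\hbf\|_2\le 2\epsilon+K_2\eta$ with $K_1>0$ is equivalent to \eqref{eq14}. Equivalently, if you insist on your inner-product identity, bound every cross term by Cauchy--Schwarz in the measurement domain, $|\langle\Phibf\ubf,\Phibf\vbf\rangle|\le\|\Phibf\ubf\|_2\|\Phibf\vbf\|_2$, and divide through by $\|\Phibf\Dbf_{T_{01}}\xbf_{T_{01}}\|_2$; this collapses your identity to the displayed triangle inequality and uses only $\delta_{s+a}$ and $\delta_b$. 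With that replacement your argument becomes the paper's proof.
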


\begin{proof} The proof is inspired by that of \cite{Candes2006a}.
Let $\fbf$ and $\hat{\fbf}$ be as in the theorem. Set $\hbf =
\fbf-\hat{\fbf}$. Our goal is to bound the norm of $\hbf$. Without
loss of generality, we assume that the first $s$ entries of
$\widetilde{\Dbf}^{*}\fbf$ are the largest in magnitude. Making
rearrangement if necessary, we may also assume that
$$\vert(\widetilde{\Dbf}^{*}\hbf)(s+1)\vert\geq
\vert(\widetilde{\Dbf}^{*}\hbf)(s+2)\vert\geq\cdots,$$ where
$(\widetilde{\Dbf}^{*}\hbf)(k)$ denotes the $k$th component of
$\widetilde{\Dbf}^{*}\hbf$. Let $T_{0}=\{1, 2, \ldots, s\}$. In
order to apply the Shifting Inequality, we partition $T_{0}^{c}$
(complement set of $T_{0}$) into the following sets: $T_{1}=\{s+1,
s+2, \ldots, s+a\}$ and $T_{i}=\{s+a+(i-2)b+1, \ldots,
s+a+(i-1)b\}$, $i=2, 3, \ldots$, with the last subset of size less
than or equal to $b$, where $a$ and $b$ are positive integers
satisfying $0<b-a\leq 3a$. Further divide each $T_{i}, i\geq 2$ into
two pieces. Set
$$T_{i1}=\{s+a+(i-2)b+1, \cdots, s+(i-1)b\},$$
and
$$T_{i2}=T_{i}\backslash T_{i1} =\{s+(i-1)b+1, \cdots,
s+(i-1)b+a\}.$$ Note that $|T_{i1}|=b-a$ and $|T_{i2}|=a$ for all
$i\geq 2$. For simplicity, we denote $T_{01}=T_{0}\cup T_{1}$. Note
first that
\begin{eqnarray}
\nonumber \Vert\hbf\Vert_{2} & = &
\Vert\Dbf\widetilde{\Dbf}^{*}\hbf\Vert_{2}=
\Vert\Dbf_{T_{01}}\widetilde{\Dbf}_{T_{01}}^{*}\hbf +
\Dbf_{T_{01}^{c}}\widetilde{\Dbf}_{T_{01}^{c}}^{*}\hbf\Vert_{2}\\\nonumber
& \leq & \| \Dbf_{T_{01}}\widetilde{\Dbf}_{T_{01}}^{*}\hbf\|_2
+ \| \Dbf_{T_{01}^{c}}\widetilde{\Dbf}_{T_{01}^{c}}^{*}\hbf \|_2 \\
\label{eq16} & \stackrel{\eqref{eq10}}{\leq} & \|
\Dbf_{T_{01}}\widetilde{\Dbf}_{T_{01}}^{*}\hbf\|_2 + \sqrt{B}\|
\widetilde{\Dbf}_{T_{01}^{c}}^{*}\hbf \|_2,
\end{eqnarray}
where $\widetilde{\Dbf}_{T}^{*}\equiv(\widetilde{\Dbf}_{T})^{*}$. To
bound the norm of $\hbf$, it is required to bound
$\Vert\widetilde{\Dbf}_{T_{01}^{c}}^{*}\hbf\Vert_{2}$ and
$\Vert\Dbf_{T_{01}}\widetilde{\Dbf}_{T_{01}}^{*}\hbf\Vert_{2}$. Then
the proof proceeds in following three steps:\\
\textbf{Step 1: Bound the tail
$\Vert\widetilde{\Dbf}_{T_{01}^{c}}^{*}\hbf\Vert_{2}$}. \
Since $\fbf$ and $\hat{\fbf}$ are feasible and $\hat{\fbf}$ is the
minimizer, we have
\begin{eqnarray}
  \nonumber \Vert\widetilde{\Dbf}_{T_{0}}^{*}\fbf \Vert_{1}+ \Vert\widetilde{\Dbf}_{T_{0}^{c}}^{*}\fbf \Vert_{1}=\Vert\widetilde{\Dbf}^{*}\fbf \Vert_{1} & \geq & \Vert\widetilde{\Dbf}^{*}\hat{\fbf}
  \Vert_{1} = \Vert\widetilde{\Dbf}^{*}\fbf - \widetilde{\Dbf}^{*}\hbf \Vert_{1}\\\nonumber
  & = & \Vert\widetilde{\Dbf}_{T_{0}}^{*}\fbf-\widetilde{\Dbf}_{T_{0}}^{*}\hbf \Vert_{1}+\Vert\widetilde{\Dbf}_{T_{0}^{c}}^{*}\fbf -
  \widetilde{\Dbf}_{T_{0}^{c}}^{*}\hbf\Vert_{1}\\ \nonumber & \geq &
  \Vert\widetilde{\Dbf}_{T_{0}}^{*}\fbf\Vert_{1} -
  \Vert\widetilde{\Dbf}_{T_{0}}^{*}\hbf\Vert_{1} +
  \Vert\widetilde{\Dbf}_{T_{0}^{c}}^{*}\hbf\Vert_{1} -
  \Vert\widetilde{\Dbf}_{T_{0}^{c}}^{*}\fbf\Vert_{1}.
\end{eqnarray}
This implies
\begin{equation}
  \label{eq17} \Vert\widetilde{\Dbf}_{T_{0}^{c}}^{*}\hbf\Vert_{1} \leq
  \Vert\widetilde{\Dbf}_{T_{0}}^{*}\hbf\Vert_{1} +
  2\Vert\widetilde{\Dbf}_{T_{0}^{c}}^{*}\fbf\Vert_{1}.
\end{equation}
If $0< b-a \leq 3a$, then applying the Shifting Inequality
\eqref{eq13} to the vectors
$\left[(\widetilde{\Dbf}_{T_{1}}^{*}\hbf)^{*},
(\widetilde{\Dbf}_{T_{21}}^{*}\hbf)^{*},
(\widetilde{\Dbf}_{T_{22}}^{*}\hbf)^{*} \right]^{*}$ and
$\left[(\widetilde{\Dbf}_{T_{(i-1)2}}^{*}\hbf)^{*},
(\widetilde{\Dbf}_{T_{i1}}^{*}\hbf)^{*},
(\widetilde{\Dbf}_{T_{i2}}^{*}\hbf)^{*} \right]^{*}$ for $i = 3, 4,
\ldots$, we have
\begin{eqnarray*}
\Vert\widetilde{\Dbf}_{T_{2}}^{*}\hbf\Vert_{2} &\leq&
\f{\Vert\widetilde{\Dbf}_{T_{1}}^{*}\hbf\Vert_{1}+\Vert\widetilde{\Dbf}_{T_{21}}^{*}\hbf\Vert_{1}}{\sqrt{b}},
 \cdots,
\\ \Vert\widetilde{\Dbf}_{T_{i}}^{*}\hbf\Vert_{2}  &\leq&
\f{\Vert\widetilde{\Dbf}_{T_{(i-1)2}}^{*}\hbf\Vert_{1}+\Vert\widetilde{\Dbf}_{T_{i1}}^{*}\hbf\Vert_{1}}{\sqrt{b}},
 \cdots.
\end{eqnarray*}
It then follows that
\begin{eqnarray*}
\sum_{i\geq2}\Vert\widetilde{\Dbf}_{T_{i}}^{*}\hbf\Vert_{2}& \leq &
\f{\Vert\widetilde{\Dbf}_{T_{0}^{c}}^{*}\hbf \Vert_{1}}{\sqrt{b}}\\
& \stackrel{\eqref{eq17}}{\leq} &
\f{\Vert\widetilde{\Dbf}_{T_{0}}^{*}\hbf \Vert_{1}}{\sqrt{b}} +
\f{2\Vert\widetilde{\Dbf}_{T_{0}^{c}}^{*}\fbf \Vert_{1}}{\sqrt{b}}\\
& \stackrel{C.S.}{\leq} &
\sqrt{\f{s}{b}}\Vert\widetilde{\Dbf}_{T_{0}}^{*}\hbf
\Vert_{2}+\f{2\Vert\widetilde{\Dbf}_{T_{0}^{c}}^{*}\fbf \Vert_{1}}{\sqrt{b}} \\
& = & \sqrt{\rho}(\Vert\widetilde{\Dbf}_{T_{0}}^{*}\hbf
\Vert_{2}+\eta)\\
& \stackrel{\eqref{eq10}}{\leq} &
\sqrt{\rho}\left({\sqrt{\widetilde{B}}}\|\hbf\|_{2}+\eta\right),
\end{eqnarray*}
where $\rho = {s}/{b}$, $\eta =
{2\Vert\widetilde{\Dbf}_{T_{0}^{c}}^{*}\fbf \Vert_{1}}/{\sqrt{s}}$,
and $C.S.$ stands for the Cauchy-Schwarz inequality. Hence,
$\Vert\widetilde{\Dbf}_{T_{01}^{c}}^{*}\hbf\Vert_{2}$ is bounded by
\begin{equation}
\label{eq18}\Vert\widetilde{\Dbf}_{T_{01}^{c}}^{*}\hbf\Vert_{2} \leq
\sum_{i\geq2}\Vert\widetilde{\Dbf}_{T_{i}}^{*}\hbf\Vert_{2} \leq
\sqrt{\rho}\left({\sqrt{\widetilde{B}}}\|\hbf\|_{2}+\eta\right).
\end{equation}\\
\textbf{Step 2: Show
$\Vert\Dbf_{T_{01}}\widetilde{\Dbf}_{T_{01}}^{*}\hbf\Vert_{2}$ is
appropriately small}. \ On the one hand,
\begin{equation}\label{eq19}\Vert\Phibf\hbf\Vert_{2}=\Vert\Phibf\fbf-\ybf-(\Phibf\hat{\fbf}-\ybf)\Vert_{2}
\leq
\Vert\Phibf\fbf-\ybf\Vert_{2}+\Vert\Phibf\hat{\fbf}-\ybf\Vert_{2}\leq
2\epsilon.
\end{equation}
On the other hand,
\begin{eqnarray} \nonumber
  \Vert\Phibf\hbf\Vert_{2} = \Vert\Phibf\Dbf\widetilde{\Dbf}^{*}\hbf\Vert_{2} & =
  &\|\Phibf\Dbf_{T_{01}}\widetilde{\Dbf}_{T_{01}}^{*}\hbf + \Phibf\Dbf_{T_{01}^{c}}\widetilde{\Dbf}_{T_{01}^{c}}^{*}\hbf
  \|_{2}\\ \nonumber & \geq &
  \Vert\Phibf\Dbf_{T_{01}}\widetilde{\Dbf}_{T_{01}}^{*}\hbf
  \Vert_{2} - \sum_{i\geq 2}\Vert\Phibf\Dbf_{T_{i}}\widetilde{\Dbf}_{T_{i}}^{*}\hbf
  \Vert_{2} \\ \nonumber& \stackrel{\eqref{eq8}}{\geq} & \sqrt{1-\delta_{s+a}}\Vert\Dbf_{T_{01}}\widetilde{\Dbf}_{T_{01}}^{*}\hbf
  \Vert_{2} - \sqrt{1+\delta_{b}}\sum_{i\geq 2}\Vert\Dbf_{T_{i}}\widetilde{\Dbf}_{T_{i}}^{*}\hbf
  \Vert_{2} \\ \nonumber& \geq & \sqrt{1-\delta_{s+a}}\Vert\Dbf_{T_{01}}\widetilde{\Dbf}_{T_{01}}^{*}\hbf
  \Vert_{2} - \sqrt{1+\delta_{b}}\sum_{i\geq 2}\Vert\Dbf_{T_{i}}\|_{2}\|\widetilde{\Dbf}_{T_{i}}^{*}\hbf
  \Vert_{2}\\ \nonumber & \stackrel{\eqref{eq10}}{\geq} & \sqrt{1-\delta_{s+a}}\Vert\Dbf_{T_{01}}\widetilde{\Dbf}_{T_{01}}^{*}\hbf
  \Vert_{2} - \sqrt{(1+\delta_{b})B}\sum_{i\geq 2}\|\widetilde{\Dbf}_{T_{i}}^{*}\hbf
  \Vert_{2}\\ \label{eq20}
  & \stackrel{\eqref{eq18}}{\geq} & \sqrt{1-\delta_{s+a}}\Vert\Dbf_{T_{01}}\widetilde{\Dbf}_{T_{01}}^{*}\hbf
  \Vert_{2} - \sqrt{\rho(1+\delta_{b})B}\left({\sqrt{\widetilde{B}}}\|\hbf\|_{2}+\eta\right).
\end{eqnarray}
Combining \eqref{eq19} and \eqref{eq20} yields
\begin{equation}
  \label{eq21} \sqrt{1-\delta_{s+a}}\Vert\Dbf_{T_{01}}\widetilde{\Dbf}_{T_{01}}^{*}\hbf
  \Vert_{2} \leq 2\epsilon + \sqrt{\rho(1+\delta_{b})B}\left({\sqrt{\widetilde{B}}}\|\hbf\|_{2}+\eta\right).
\end{equation}\\
\textbf{Step 3: Bound the error of $\hbf$}. \  It
follows from \eqref{eq16} and \eqref{eq18},
\begin{eqnarray}
\nonumber\|\hbf\|_{2} & \leq & \|
\Dbf_{T_{01}}\widetilde{\Dbf}_{T_{01}}^{*}\hbf\|_2 + \sqrt{B}\|
\widetilde{\Dbf}_{T_{01}^{c}}^{*}\hbf \|_2 \\ \label{eq22} &
\stackrel{\eqref{eq18}}{\leq} & \|
\Dbf_{T_{01}}\widetilde{\Dbf}_{T_{01}}^{*}\hbf\|_2 + \sqrt{\rho B
\widetilde{B}}\|\hbf\|_2+\sqrt{\rho B}\cdot\eta.
\end{eqnarray}
Combining \eqref{eq21} with \eqref{eq22} yields
\begin{equation}
  \label{eq24} K_{1}\Vert\hbf\Vert_{2}\leq 2\epsilon +
  K_{2}\eta,
\end{equation}
where \begin{eqnarray*} K_{1}& = & \sqrt{1-\delta_{s+a}}-\sqrt{\rho
 B \widetilde{B}(1-\delta_{s+a})}
-\sqrt{\rho B
\widetilde{B}(1+\delta_{b})},\\
K_{2} & = & \sqrt{\rho B(1-\delta_{s+a})}+\sqrt{\rho
B(1+\delta_{b})}.
\end{eqnarray*}
If $K_{1}$ is positive, then we have
\begin{equation}
  \label{eq25} \Vert\hbf\Vert_{2}\leq \f{2}{K_{1}}\cdot\epsilon +
  \f{K_{2}}{K_{1}}\cdot\eta = C_{0}\cdot\epsilon +
    C_{1}\cdot\f{\Vert\widetilde{\Dbf}\fbf-(\widetilde{\Dbf}\fbf)_{s}\Vert_{1}}{\sqrt{s}},
\end{equation}
where $C_{0}={2}/{K_{1}}$ and $C_{1}={2K_{2}}/{K_{1}}$. At last,
note that if
\begin{equation}
    \label{eq27} \left(1-\sqrt{\rho B
\widetilde{B}}\right)^2 \cdot \delta_{s+a} + \rho B
\widetilde{B}\cdot\delta_{b} < 1 - 2\sqrt{\rho B \widetilde{B}},
  \end{equation}
then $K_1>0$. This completes the proof.
\end{proof}

\noindent{\bf Remark 1:}\ The $\Dbf$-RIP condition can now be
$\delta_{2s} < 0.1398$ in the case of Parseval frames.   Suppose
$\Dbf$ is a Parseval frame and the analysis operator
$\widetilde{\Dbf}$ is its canonical dual frame, i.e.,
$\widetilde{\Dbf}=\Dbf$ as seen in \cite{Candes2010}. Then
\eqref{eq14} becomes, since $B \widetilde{B}=1$,
\begin{equation}\label{eq28}
(1-\sqrt{\rho})^{2}\cdot\delta_{s+a} + \rho\cdot\delta_{b} <
1-2\sqrt{\rho}.
\end{equation}
Note that different choices of $a$ and $b$ may lead to different
conditions. For example, let $a=3s, b=12s,$ and $\rho = s/b =1/12$.
Then \eqref{eq28} becomes
\begin{equation}\label{eq29}
 (13-4\sqrt{3})\cdot\delta_{4s}+\delta_{12s}<12-4\sqrt{3}.
\end{equation}
By the fact that $\delta_{ks}\leq k\cdot\delta_{2s}$ for positive
integers $k$ and $s$ (Corollary 3.4 of \cite{Needell2009}),
\eqref{eq29} is satisfied whenever
$\delta_{2s}<(3-\sqrt{3})/(16-4\sqrt{3})\approx 0.1398$. This
condition is weaker than the condition $\delta_{2s}<0.08$ obtained
in \cite{Candes2010}.

\noindent {\bf Remark 2:}\ When $\Dbf$ is a general frame and the
analysis operator $\widetilde{\Dbf}$ is its canonical dual frame,
i.e., $\widetilde{\Dbf}=(\Dbf\Dbf^{*})^{-1}\Dbf$, then \eqref{eq14}
may be expressed as
\begin{equation}\label{eq30}
(1-\sqrt{\rho\kappa})^{2}\cdot\delta_{s+a} + \rho\kappa\cdot\delta_{b} <
1-2\sqrt{\rho\kappa},
\end{equation}
where $\kappa=B \widetilde{B}=B/A$ is the ratio of the frame bounds.
We see that this sufficient condition not only depends on the
$\Dbf$-RIP constants of $\Phibf$, but also on the ratio of frame
bounds $\kappa=B/A$. Furthermore, as $\kappa$ increases, it will lead to
a stronger condition on $\Phibf$. For instance, let $a=7s, b=8s$,
and $\rho=1/8$, for different $\kappa$'s, e.g., $\kappa=1$ and
$\kappa=\sqrt{2}$, \eqref{eq30} becomes $\delta_{8s}<0.5395$ and
$\delta_{8s}<0.3104$, respectively. The former is obviously much
weaker than the latter. Hence, from this point of view, whenever a
Parseval frame is allowed in specific applications, it makes sense
to use the Parseval frame $(\kappa=1)$.

\noindent {\bf Remark 3:}\ In general, when $\Dbf$ is a general
frame and $\widetilde{\Dbf}$ is an alternative dual frame of $\Dbf$,
we see that the product of the upper frame bounds $B\widetilde{B}$
(of $\Dbf$ and $\widetilde{\Dbf}$) is a factor in the sufficient
condition. Evidently, $B\widetilde{B}$ is similar to $\kappa$ in the
case of the canonical dual. A larger $B\widetilde{B}$ will lead to a
stronger condition on $\Phibf$.

\noindent {\bf Remark 4:}\ The results obtained in Theorem
\ref{thm1} for bounded noise can be applied directly to Gaussian
noise, i.e., $\zbf\sim \n(\zebf, \sigma^2\Ibf_m)$, because in this
case $\zbf$ belongs to a bounded set with large probability, as the
following lemma asserted.
\begin{Lemma}{\rm \cite{Cai2009}}\label{lem2}
The Gaussian error $\zbf\sim \Ncal(\zebf, \sigma^{2}\Ibf_{m})$
satisfies
\begin{equation}
 \label{eq47} \Pr\left(\Vert\zbf\Vert_{2}\leq \sigma\sqrt{m+2\sqrt{m\log
  m}} \right) \geq  1-\f{1}{m}.
\end{equation}
\end{Lemma}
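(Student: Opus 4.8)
The plan is to reduce the statement to a one-sided tail bound for a chi-squared random variable and then apply a Chernoff-type estimate. Write $\zbf = \sigma\gbf$ with $\gbf \sim \Ncal(\zebf, \Ibf_m)$; then $X := \|\gbf\|_2^2 = \|\zbf\|_2^2/\sigma^2$ is a chi-squared variable with $m$ degrees of freedom, and the event in \eqref{eq47} is exactly $\{X \leq m + 2\sqrt{m\log m}\}$. Hence it suffices to show the upper-tail bound $\Pr\big(X > m + 2\sqrt{m\log m}\big) \leq 1/m$.

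To control this tail I would use the moment generating function $\e\, e^{\lambda X} = (1-2\lambda)^{-m/2}$, valid for $\lambda < 1/2$, together with Markov's inequality. For a threshold $u > m$ this gives $\Pr(X \geq u) \leq (1-2\lambda)^{-m/2} e^{-\lambda u}$, and optimizing over $\lambda$ (the minimizer is $\lambda = \tfrac{1}{2}(1 - m/u)$) yields the clean form $\Pr(X \geq u) \leq \exp\!\big(\tfrac{m}{2}[\log(u/m) - (u/m - 1)]\big)$. Setting $u = m + 2\sqrt{m\log m}$ and writing $\delta = 2\sqrt{(\log m)/m}$, so that $u/m = 1+\delta$, the exponent becomes $\tfrac{m}{2}[\log(1+\delta) - \delta]$; since $\log(1+\delta) - \delta \approx -\delta^2/2$, this recovers the target rate $-\tfrac{m\delta^2}{4} = -\log m$. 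An equivalent route is Gaussian Lipschitz concentration applied to the $1$-Lipschitz map $\gbf \mapsto \|\gbf\|_2$, using the bound $\e\|\gbf\|_2 \leq \sqrt{m}$.

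The delicate point -- and the step I expect to be the real obstacle -- is the precise constant in the $\sqrt{m\log m}$ term. The correction to $\log(1+\delta) - \delta$ is $+\delta^3/3 > 0$, so the optimized Chernoff exponent is $-\log m/(1+\delta)$ rather than exactly $-\log m$; equivalently, the standard Laurent--Massart form lands at a threshold carrying an extra $+2\log m$. Thus the bare exponential bound falls just short of $1/m$ at the exact stated threshold. The resolution is that the true tail is in fact strictly smaller: a central-limit/saddlepoint heuristic gives $\Pr\big(X > m + 2\sqrt{m\log m}\big) \approx (m\sqrt{4\pi\log m})^{-1}$, comfortably below $1/m$, so the lemma does hold. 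Making this rigorous requires a tail estimate sharper than the raw Chernoff exponent -- for instance a Bahadur--Rao-type refinement retaining the polynomial prefactor, or simply the exact chi-squared tail bound invoked from \cite{Cai2009} -- possibly handling small $m$ by direct verification and large $m$ by the refined estimate.
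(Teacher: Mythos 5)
Your proposal cannot be compared line-by-line with anything in the paper, because the paper contains no proof of Lemma \ref{lem2}: the statement is imported verbatim from \cite{Cai2009}. The only meaningful comparison is with the argument in that reference, and your route coincides with it---reduce to the upper tail of $X=\|\zbf\|_2^2/\sigma^2\sim\chi^2_m$ at the threshold $m(1+\lambda)$ with $\lambda=2\sqrt{(\log m)/m}$. Your diagnosis of the crux is correct and quantitatively accurate: the optimized Chernoff exponent $\frac{m}{2}\left[\log(1+\lambda)-\lambda\right]$ misses $-\log m$ by roughly $\frac{m\lambda^{3}}{6}=\frac{4(\log m)^{3/2}}{3\sqrt{m}}$, so neither the bare Chernoff bound nor Laurent--Massart (whose threshold indeed carries the extra $+2\log m$) delivers $1/m$ at this exact threshold. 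The step you leave open is closed by precisely the refinement you name: the proof behind \cite{Cai2009} rests on a chi-squared tail bound that retains the polynomial prefactor, $\Pr\left(\chi^{2}_{m}\geq m(1+\lambda)\right)\leq\left(\lambda\sqrt{\pi m}\right)^{-1}\exp\left(-\frac{m}{2}\left(\lambda-\log(1+\lambda)\right)\right)$, whose prefactor at this $\lambda$ equals $\left(2\sqrt{\pi\log m}\right)^{-1}$. Combining it with $\lambda-\log(1+\lambda)\geq\frac{\lambda^{2}}{2}-\frac{\lambda^{3}}{3}$ (valid for all $\lambda>0$, by a derivative comparison, so no smallness of $\lambda$ is needed), the lemma reduces to the elementary inequality $\exp\left(\frac{4(\log m)^{3/2}}{3\sqrt{m}}\right)\leq 2\sqrt{\pi\log m}$, which holds for every $m\geq 2$: the exponent is maximized at $m=e^{3}$, where it equals $\frac{4}{3}(3/e)^{3/2}\approx 1.546$, so the left side never exceeds $4.7$, while the right side exceeds $4.7$ for all $m\geq 6$, and $m=2,\ldots,5$ are verified directly; $m=1$ is trivial. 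One genuine error in a side remark: Gaussian Lipschitz concentration is \emph{not} an equivalent route here---the relevant deviation is $t\approx\sqrt{\log m}$, so it yields only $e^{-t^{2}/2}\approx m^{-1/2}$, which is weaker than required. Modulo that remark and writing out the final verification above, your proposal is in substance the same proof as in the cited reference.
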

A combination of Theorem \ref{thm1} and Lemma \ref{lem2} leads to
the following result for the Gaussian noise case.

\begin{Corollary}\label{corollary1} (Gaussian Noise Case) Let $\Dbf$ be a general frame of $\Rnum^{n}$
with frame bounds $0<A\leq B<\infty$.  Let $\widetilde{\Dbf}$ be an
alternative
  dual frame of $\Dbf$ with frame bounds $0<\widetilde{A}\leq \widetilde{B}<\infty$, and let $\rho=s/b$. Suppose
  \begin{equation}
    \left(1-\sqrt{\rho B \widetilde{B}}\right)^2 \cdot \delta_{s+a} +
\rho B \widetilde{B}\cdot\delta_{b} < 1 - 2\sqrt{\rho B
\widetilde{B}}
  \end{equation}
holds for some positive integers $a$ and $b$ satisfying $0< b-a\leq
3a$. Then with probability at least $1-(1/m)$, the solution
$\hat{\fbf}$ to \eqref{generall1analysis} with $\epsilon =
\sigma\sqrt{m+2\sqrt{m\log m}}$ satisfies
  \begin{equation}
    \Vert \hat{\fbf}-\fbf \Vert_{2} \leq C_{0}\cdot\sigma\sqrt{m+2\sqrt{m\log
  m}} +
    C_{1}\cdot\f{\Vert\widetilde{\Dbf}^{*}\fbf-(\widetilde{\Dbf}^{*}\fbf)_{s}\Vert_{1}}{\sqrt{s}},
  \end{equation}
 where $C_{0}$ and $C_{1}$ are some constants and $(\widetilde{\Dbf}^{*}\fbf)_{s}$
 denotes the vector consisting the largest $s$ entries of
 $\widetilde{\Dbf}^{*}\fbf$ in magnitude.
\end{Corollary}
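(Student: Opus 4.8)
The plan is to obtain Corollary \ref{corollary1} as an immediate consequence of Theorem \ref{thm1} via a conditioning argument, using Lemma \ref{lem2} to control the Gaussian noise. The key observation is that Theorem \ref{thm1} is entirely deterministic once the noise level $\epsilon$ is fixed: its hypotheses involve only the $\Dbf$-RIP constants of $\Phibf$ and the frame bounds $A, B, \widetilde{A}, \widetilde{B}$, none of which depend on the realization of $\zbf$. The only place the noise enters the proof of Theorem \ref{thm1} is through the feasibility of the true signal $\fbf$ in \eqref{generall1analysis}, i.e. through the requirement $\Vert\ybf - \Phibf\fbf\Vert_2 \leq \epsilon$. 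Since $\ybf = \Phibf\fbf + \zbf$, this is precisely the condition $\Vert\zbf\Vert_2 \leq \epsilon$.

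First I would fix $\epsilon = \sigma\sqrt{m + 2\sqrt{m\log m}}$ and introduce the event $E = \{\Vert\zbf\Vert_2 \leq \epsilon\}$. By Lemma \ref{lem2} we have $\Pr(E) \geq 1 - 1/m$. On $E$ the true signal $\fbf$ satisfies $\Vert\ybf - \Phibf\fbf\Vert_2 = \Vert\zbf\Vert_2 \leq \epsilon$, so $\fbf$ is feasible for \eqref{generall1analysis} with this choice of $\epsilon$; consequently the minimizer $\hat{\fbf}$ obeys $\Vert\ybf - \Phibf\hat{\fbf}\Vert_2 \leq \epsilon$ as well, and both terms appearing in the triangle-inequality estimate \eqref{eq19} are then bounded by $\epsilon$.

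With the feasibility of $\fbf$ secured on $E$, every step of the proof of Theorem \ref{thm1} applies verbatim, yielding the error bound \eqref{eq15} with $\epsilon$ replaced by $\sigma\sqrt{m + 2\sqrt{m\log m}}$. Since this estimate holds on $E$, which has probability at least $1 - 1/m$, the stated conclusion follows. The argument carries no genuine obstacle; the only point requiring care is to confirm that the randomness of $\zbf$ influences the conclusion of Theorem \ref{thm1} \emph{solely} through the feasibility of $\fbf$, so that conditioning on the single event $E$ suffices to transplant the deterministic estimate into the desired probabilistic statement.
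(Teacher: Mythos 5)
Your proposal is correct and is exactly the argument the paper intends: the paper presents Corollary \ref{corollary1} as an immediate combination of Theorem \ref{thm1} and Lemma \ref{lem2}, and your conditioning on the event $\{\Vert\zbf\Vert_2 \leq \epsilon\}$, which guarantees feasibility of the true $\fbf$ and hence lets the deterministic proof of Theorem \ref{thm1} run verbatim, is precisely how that combination works. Your added observation that the randomness of $\zbf$ enters the argument only through this feasibility (via \eqref{eq19}) is the one detail the paper leaves implicit, and you have verified it correctly.
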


\subsection{An Improvement in the Case of the Canonical Dual Frame}

We also notice that when using the explicit matrix structure of the
canonical dual
$\widetilde{\Dbf}=\bar{\Dbf}=(\Dbf\Dbf^{*})^{-1}\Dbf$, the
sufficient condition can be further improved.  It seems to us that
such an improvement can not easily carry through to the general dual
frame case.

\begin{Theorem}\label{thm2}
  Let $\Dbf$ be a general frame of $\Rnum^{n}$ with frame bound $0<A\leq B<\infty$
  and $\bar{\Dbf}$ be the canonical dual frame of $\Dbf$. Let
  $\kappa=B/A$ and $\rho=s/b$ such that $\rho<1/\kappa$. Suppose
  \begin{equation}
    \label{SuffiCon} (1-\rho\kappa)^2 \cdot \delta_{s+a} +
\rho\kappa^3\cdot\delta_{b} < (1-\rho\kappa)^2 - \rho\kappa^3
  \end{equation}
holds for some positive integers $a$ and $b$ satisfying $0<b-a\leq
3a$. Then the solution $\hat{\fbf}$ to \eqref{generall1analysis} (with the canonical dual frame as the analysis operator)
satisfies
  \begin{equation}
    \label{ErrorBound} \Vert \hat{\fbf}-\fbf \Vert_{2} \leq C_{0}\cdot\epsilon +
    C_{1}\cdot\f{\Vert\bar{\Dbf}^{*}\fbf-(\bar{\Dbf}^{*}\fbf)_{s}\Vert_{1}}{\sqrt{s}},
  \end{equation}
 where $C_{0}$ and $C_{1}$ are some constants and $(\bar{\Dbf}^{*}\fbf)_{s}$
 denotes the vector consisting the largest $s$ entries of
 $\bar{\Dbf}^{*}\fbf$ in magnitude.
\end{Theorem}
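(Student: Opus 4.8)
The plan is to run the three–step argument of Theorem~\ref{thm1} with $\widetilde{\Dbf}$ specialized to $\bar{\Dbf}=(\Dbf\Dbf^{*})^{-1}\Dbf$, and to replace the one generic frame–bound estimate by a sharper one built from the explicit form $\bar{\Dbf}^{*}=\Dbf^{*}(\Dbf\Dbf^{*})^{-1}$. Set $\hbf=\fbf-\hat{\fbf}$ and $\xbf=\bar{\Dbf}^{*}\hbf$, sort the entries of $\xbf$ off $T_{0}$ by magnitude, and form $T_{0},T_{1},T_{i},T_{01}$ exactly as before so that Lemma~\ref{lem1} applies under $0<b-a\le 3a$. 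The feasibility/minimality step producing the cone inequality \eqref{eq17}, the Shifting–Inequality tail bound \eqref{eq18}, and the RIP ``tube'' bound \eqref{eq21} never use the particular form of the analysis operator, so they transfer verbatim (with $\widetilde{B}=1/A$): I obtain $\sum_{i\ge2}\|\xbf_{T_{i}}\|_{2}\le\sqrt{\rho}\,(\|\xbf_{T_{0}}\|_{2}+\eta)$ and $\sqrt{1-\delta_{s+a}}\,\|\Dbf_{T_{01}}\xbf_{T_{01}}\|_{2}\le 2\epsilon+\sqrt{(1+\delta_{b})B}\sum_{i\ge2}\|\xbf_{T_{i}}\|_{2}$.

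The heart of the improvement is the \emph{reconstruction} step, i.e.\ the passage in Theorem~\ref{thm1} from \eqref{eq16} to \eqref{eq22}, where the head coefficient $\|\xbf_{T_{0}}\|_{2}$ is fed back through the dual frame bound $\|\xbf_{T_{0}}\|_{2}\le\|\xbf\|_{2}\le(1/\sqrt{A})\|\hbf\|_{2}$. This is exactly the estimate I would replace using the explicit canonical structure. Since $\bar{\Dbf}^{*}\Dbf=\Dbf^{*}(\Dbf\Dbf^{*})^{-1}\Dbf$ is the orthogonal projection onto $\mathrm{range}(\Dbf^{*})$ and $\Dbf\xbf=\hbf$, the canonical coefficients are \emph{consistent}, $\xbf=\bar{\Dbf}^{*}\Dbf\xbf$, so that on the head block
\[
\xbf_{T_{01}}=\bar{\Dbf}_{T_{01}}^{*}\,\Dbf_{T_{01}}\xbf_{T_{01}}+\bar{\Dbf}_{T_{01}}^{*}\,\Dbf_{T_{01}^{c}}\xbf_{T_{01}^{c}},
\]
where $\|\bar{\Dbf}_{T_{01}}^{*}\|=\|\Dbf_{T_{01}}^{*}(\Dbf\Dbf^{*})^{-1}\|\le\sqrt{B}/A$ and the off–diagonal block of the projection has operator norm at most one. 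Tying $\|\xbf_{T_{0}}\|_{2}$ to the RIP–controlled quantity $\|\Dbf_{T_{01}}\xbf_{T_{01}}\|_{2}$ (rather than to $\|\hbf\|_{2}$) in this way is the mechanism by which I would convert the cross term $\sqrt{\rho B\widetilde{B}}\,\|\hbf\|_{2}=\sqrt{\rho\kappa}\,\|\hbf\|_{2}$ of Theorem~\ref{thm1} (cf.\ Remark~2) into a genuinely \emph{linear} $\rho\kappa$ dependence, at the cost of promoting the RIP cross term's factor from $\kappa$ to $\kappa^{3}$.

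I would then close the loop as in Step~3: substitute the sharpened head estimate into the tail bound and the reconstruction bound, collect every occurrence of $\|\hbf\|_{2}$ on the left, and read off a coefficient $K_{1}$ of the form $\sqrt{1-\delta_{s+a}}\,(1-\rho\kappa)-\sqrt{\rho\kappa^{3}(1+\delta_{b})}$. Squaring to clear $\sqrt{1-\delta_{s+a}}$ shows $K_{1}>0$ is equivalent to \eqref{SuffiCon}; the standing hypothesis $\rho<1/\kappa$ is precisely what makes the self–referential head/tail inequality contractive (so it can be solved) and what keeps the right–hand side $(1-\rho\kappa)^{2}-\rho\kappa^{3}$ positive. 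With $K_{1}>0$ the bound \eqref{ErrorBound} follows with $C_{0}=2/K_{1}$ and $C_{1}=2K_{2}/K_{1}$, mirroring \eqref{eq25}.

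The main obstacle I expect is not any single inequality but the $\kappa$–sensitive bookkeeping needed to land the two cross terms with the exact coefficients $\rho\kappa$ and $\rho\kappa^{3}$ so that positivity of $K_{1}$ collapses to the clean form \eqref{SuffiCon}. A point to check with care is that this sharpening uses only the \emph{canonical} identity $\bar{\Dbf}^{*}\Dbf=\Ibf_{d}-\Pbf$ (the projection/consistency property); for a general alternative dual $\widetilde{\Dbf}$ the matrix $\widetilde{\Dbf}^{*}\Dbf$ need not be an orthogonal projection, which is why, as noted before the statement, the improvement does not carry over to Theorem~\ref{thm1}'s setting.
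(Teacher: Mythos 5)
Your outer scaffolding does transfer as you say: the cone inequality, the Shifting--Inequality tail bound with $\widetilde{B}=1/A$, and the $\Dbf$-RIP bound carry over verbatim and match \eqref{eq31}--\eqref{eq32} in the paper. The gap is in the heart of your argument: the head-block consistency mechanism does not deliver the $K_{1}$ you claim. Write $\xbf=\bar{\Dbf}^{*}\hbf$, $X=\|\xbf_{T_{01}}\|_{2}$, $Y=\|\Dbf_{T_{01}}\xbf_{T_{01}}\|_{2}$. Your decomposition and norm bounds give $X\leq(\sqrt{B}/A)\,Y+\|\xbf_{T_{01}^{c}}\|_{2}$, and the tail bound gives $\|\xbf_{T_{01}^{c}}\|_{2}\leq\sqrt{\rho}\,(X+\eta)$, hence $(1-\sqrt{\rho})\,X\leq(\sqrt{B}/A)\,Y+\sqrt{\rho}\,\eta$. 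The contraction factor produced by this \emph{linear} scheme is $1-\sqrt{\rho}$, and it cannot be upgraded to $1-\rho\kappa$, because the tail enters the identity linearly; chasing the constants through \eqref{eq32} yields the sufficient condition $(1-\sqrt{\rho})^{2}(1-\delta_{s+a})>\rho\kappa^{2}(1+\delta_{b})$, not \eqref{SuffiCon}. These conditions are genuinely different: at the parameters of Remark 5 ($\kappa=1$, $a=s$, $b=4s$, $\rho=1/4$), the right-hand side of your condition, $(1-\sqrt{\rho})^{2}-\rho\kappa^{2}$, equals exactly $0$ (and remains $0$ even with the sharper bound $\|\bar{\Dbf}_{T_{01}}^{*}\|\leq 1/\sqrt{A}$), so your route proves nothing in precisely the regime that gives $\delta_{2s}<0.2$, whereas \eqref{SuffiCon} is comfortably satisfiable there. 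Your final $K_{1}=\sqrt{1-\delta_{s+a}}\,(1-\rho\kappa)-\sqrt{\rho\kappa^{3}(1+\delta_{b})}$ is therefore an assertion, not a consequence of the steps you outline.

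The paper obtains the linear $\rho\kappa$ dependence by a different device, which is the piece missing from your proposal: it bounds the \emph{squared} error, $\|\hbf\|_{2}^{2}\leq B\|\bar{\Dbf}_{T_{01}}^{*}\hbf\|_{2}^{2}+B\|\bar{\Dbf}_{T_{01}^{c}}^{*}\hbf\|_{2}^{2}$, so that the tail contributes $B\rho\left(\|\hbf\|_{2}/\sqrt{A}+\eta\right)^{2}$, i.e.\ a term $\rho\kappa\|\hbf\|_{2}^{2}$ --- squaring is exactly what converts the $\sqrt{\rho\kappa}$ of Theorem~\ref{thm1} into $\rho\kappa$. The head term is then handled by the exact Gram identity $\|\bar{\Dbf}_{T_{01}}^{*}\hbf\|_{2}^{2}=\langle(\Dbf\Dbf^{*})^{-1}\hbf,\ \Dbf_{T_{01}}\bar{\Dbf}_{T_{01}}^{*}\hbf\rangle$ (this inner-product representation, not the projection identity $\xbf=\bar{\Dbf}^{*}\Dbf\xbf$, is where the canonical structure is actually used), followed by Cauchy--Schwarz against $\|\hbf\|_{2}$ rather than against the head block, and two applications of Young's inequality $uv\leq cu^{2}/2+v^{2}/(2c)$ whose parameters are optimized ($c_{2}\to 0_{+}$, $c_{1}=(1-\rho\kappa)/\kappa$); positivity of the resulting $K_{1}$ then collapses precisely to \eqref{SuffiCon}, with the hypothesis $\rho<1/\kappa$ needed to keep $c_{1}>0$ and \eqref{NecCon} valid. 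To repair your proof you would have to replace your linear head-block estimate by this squared-norm argument.
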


\begin{proof}
In this case, \eqref{eq18} and \eqref{eq21}
respectively become
\begin{equation}
\label{eq31}\Vert\bar{\Dbf}_{T_{01}^{c}}^{*}\hbf\Vert_{2} \leq
\sum_{i\geq2}\Vert\bar{\Dbf}_{T_{i}}^{*}\hbf\Vert_{2} \leq
\sqrt{\rho}\left(\f{1}{\sqrt{{A}}}\|\hbf\|_{2}+\eta\right)
\end{equation}
and
\begin{equation}
  \label{eq32} \sqrt{1-\delta_{s+a}}\Vert\Dbf_{T_{01}}\bar{\Dbf}_{T_{01}}^{*}\hbf
  \Vert_{2} \leq 2\epsilon + \sqrt{\rho(1+\delta_{b})B}\left(\f{1}{\sqrt{{A}}}\|\hbf\|_{2}+\eta\right).
\end{equation}
We have
\begin{eqnarray}
\nonumber \Vert\hbf\Vert_{2}^{2} & = &
\Vert\Dbf\bar{\Dbf}^{*}\hbf\Vert_{2}^{2}
\stackrel{\eqref{eq10}}{\leq} B\Vert\bar{\Dbf}^{*}\hbf\Vert_{2}^{2}=
B\Vert\bar{\Dbf}_{T_{01}}^{*}\hbf\Vert_{2}^{2} +
B\Vert\bar{\Dbf}_{T_{01}^{c}}^{*}\hbf\Vert_{2}^{2}\\\nonumber & = &
B\langle(\Dbf\Dbf^{*})^{-1}\hbf,
\Dbf_{T_{01}}\bar{\Dbf}_{T_{01}}^{*}\hbf\rangle +
B\Vert\bar{\Dbf}_{T_{01}^{c}}^{*}\hbf\Vert_{2}^{2}\\\nonumber &
\stackrel{C.S.}{\leq} & B\Vert(\Dbf\Dbf^{*})^{-1}\hbf\Vert_{2}
\|\Dbf_{T_{01}}\bar{\Dbf}_{T_{01}}^{*}\hbf\|_{2}+
B\Vert\bar{\Dbf}_{T_{01}^{c}}^{*}\hbf\Vert_{2}^{2}\\ \nonumber &
\stackrel{\eqref{eq18}}{\leq} & \f{B}{A}\Vert\hbf\Vert_{2}
\Vert\Dbf_{T_{01}}\bar{\Dbf}_{T_{01}}^{*}\hbf\Vert_{2} +
B\rho\left(\f{1}{\sqrt{A}}\|\hbf\|_{2}+\eta\right)^{2}\\
\label{eq33} & = & \f{B}{A}\Vert\hbf\Vert_{2}
\Vert\Dbf_{T_{01}}\bar{\Dbf}_{T_{01}}^{*}\hbf\Vert_{2} +
\f{B\rho}{A}\|\hbf\|_2^2 + \f{2B\rho}{\sqrt{A}}
\|\hbf\|_{2}\cdot\eta + B\rho\eta^{2}.
\end{eqnarray}
Applying the fact that $uv\leq \f{cu^2}{2}+\f{v^2}{2c}$ for any
value $u$, $v$ and $c>0$ twice to \eqref{eq33}, we have
\begin{eqnarray*}
\Vert\hbf\Vert_{2}^{2} & {\leq} &
\f{B}{A}\left(\f{c_{1}\Vert\hbf\Vert_{2}^{2}}{2}+\f{\Vert\Dbf_{T_{01}}\bar{\Dbf}_{T_{01}}^{*}\hbf
\Vert_{2}^{2}}{2c_{1}}\right) + \f{B\rho}{A}\|\hbf\|_2^2 +
\f{2B\rho}{\sqrt{A}}
\|\hbf\|_{2}\cdot\eta + B\rho\eta^{2}\\
& {\leq}
&\f{B}{A}\left(\f{c_{1}\Vert\hbf\Vert_{2}^{2}}{2}+\f{\Vert\Dbf_{T_{01}}\bar{\Dbf}_{T_{01}}^{*}\hbf
\Vert_{2}^{2}}{2c_{1}}\right) + \f{B\rho}{A}\|\hbf\|_2^2 +
\f{2B\rho}{\sqrt{A}}
\left(\f{c_{2}\|\hbf\|_2^2}{2}+\f{\eta^2}{2c_2}\right) +
B\rho\eta^{2},
\end{eqnarray*}
where $c_{1}, c_{2}>0$. Let $\kappa=B/A$ and simplifying the above
equation yields
$$\left( 1-\f{c_{1}\kappa}{2}-\rho\kappa-c_{2}\rho\sqrt{\kappa B}\right)\Vert\hbf\Vert_{2}^{2} \leq \f{\kappa}{2c_{1}}\Vert\Dbf_{T_{01}}\bar{\Dbf}_{T_{01}}^{*}\hbf
\Vert_{2}^{2}+\left({\rho\sqrt{\kappa B}}/{c_{2}}+\rho
B\right)\eta^{2}.$$ Using the fact that $\sqrt{u^{2}+v^{2}}\leq u+v$
for $u,v\geq0$, we obtain
\begin{equation}\label{eq34}\Vert\hbf\Vert_{2}\sqrt{\left( 1-\f{c_{1}\kappa}{2}-\rho\kappa-c_{2}\rho\sqrt{\kappa B}\right)}
\leq
\sqrt{\f{\kappa}{2c_1}}\Vert\Dbf_{T_{01}}\bar{\Dbf}_{T_{01}}^{*}\hbf
\Vert_{2}+\eta\sqrt{\left({\rho\sqrt{\kappa B}}/{c_{2}}+\rho
B\right)}.
\end{equation}
Here we have assumed that
\begin{equation}
\label{NecCon}1-\f{c_{1}\kappa}{2}-\rho\kappa-c_{2}\rho\sqrt{\kappa B}>0.
\end{equation}
Combining \eqref{eq32} with \eqref{eq34} yields
\begin{equation}
  \label{eq35} K_{1}\Vert\hbf\Vert_{2}\leq 2\epsilon +
  K_{2}\eta,
\end{equation}
where \begin{eqnarray*} K_{1}& = &
\sqrt{\f{2c_{1}}{\kappa}(1-\delta_{s+a})\left(
1-\f{c_{1}\kappa}{2}-\rho\kappa-c_{2}\rho\sqrt{\kappa B}\right)}
-\sqrt{\rho\kappa(1+\delta_{b})},\\ K_{2} & = &
\sqrt{\f{2c_{1}}{\kappa}(1-\delta_{s+a})\left({\rho\sqrt{\kappa
B}}/{c_{2}}+\rho B\right)}+\sqrt{\rho B(1+\delta_{b})}.
\end{eqnarray*}
If $K_{1}$ is positive, then we have
\begin{equation}
  \label{eq36} \Vert\hbf\Vert_{2}\leq \f{2}{K_{1}}\cdot\epsilon +
  \f{K_{2}}{K_{1}}\cdot\eta = C_{0}\cdot\epsilon +
    C_{1}\cdot\f{\Vert\bar{\Dbf}\fbf-(\bar{\Dbf}\fbf)_{s}\Vert_{1}}{\sqrt{s}},
\end{equation}
where $C_{0}={2}/{K_{1}}$ and $C_{1}={2K_{2}}/{K_{1}}$. We now
consider how to properly choose the parameters $c_{1}, c_{2}>0$ such
that $K_{1}$ is positive and \eqref{NecCon} holds. Let $g(c_{1},
c_{2})=2c_{1}( 1-\f{c_{1}\kappa}{2}-\rho\kappa-c_{2}\rho\sqrt{\kappa B}),
\ \  c_{1}, c_{2}>0$. Note first that $g(c_{1}, c_{2})$ decreases as
$c_{2}$ increases. Thus we can take $c_{2}$ arbitrarily small, i.e.,
$c_2\rightarrow 0_{+}$, then $g(c_{1}, c_{2})$ reduces to
$g(c_{1})=2c_{1}( 1-\f{c_{1}\kappa}{2}-\rho\kappa)$. Further, $g(c_{1})$
achieves its maximum at $c_{1}^{\textrm{opt}} = (
1-\rho\kappa)/{\kappa}$. Hence, we choose $c_{1}=c_{1}^{\textrm{opt}}$
and $K_{1}>0$ is guaranteed provided that
\begin{equation}
\label{eq38}(1-\rho\kappa)^2 \cdot \delta_{s+a} +
\rho\kappa^3\cdot\delta_{b} < (1-\rho\kappa)^2 - \rho\kappa^3.
\end{equation}
To guarantee $c_1>0$ and \eqref{NecCon} holds, it is also required
that
\begin{equation}
  \label{RhoCon} \rho < \f{1}{\kappa}.
\end{equation}
This completes the proof.

\end{proof}

\noindent{\bf Remark 5:}\ The D-RIP condition can now be
$\delta_{2s} < 0.2$ in the case of Parseval frames.   Suppose $\Dbf$
is a Parseval frame and the analysis operator $\widetilde{\Dbf}$ is
its canonical dual frame, i.e., $\widetilde{\Dbf}=\Dbf$. Then
\eqref{SuffiCon} becomes, since $\kappa=1$,
\begin{equation}
\label{eq39}(1-\rho)^2 \cdot \delta_{s+a} + \rho\cdot\delta_{b} <
(1-\rho)^2 - \rho.
\end{equation}
Again, different choices of $a$ and $b$ will lead to different
conditions. For instance, let $a=s, b=4s,$ and $\rho = s/b =1/4<1$.
Then \eqref{eq39} becomes
\begin{equation}\label{eq40}
 9\delta_{2s}+4\delta_{4s}<5.
\end{equation}
which is satisfied whenever $\delta_{2s}<0.2$. Note also that
smaller $\delta_{4s}$ will lead to smaller constants in the error
bound. For example, let $c_2=1/10$ and $c_1=1-\rho-c_2\rho=29/40$,
then we have $C_0 = 29.1$ and $C_1 = 66.5$ whenever $\delta_{4s}\leq
1/4$. If $\delta_{4s}$ has a tighter restriction, i.e.,
$\delta_{4s}\leq 1/8$, then the constants become to $C_0 = 13.6$ and
$C_1 = 32.5$.

\section{Optimal-dual-based $\ell_1$-analysis and an Iterative Algorithm}\label{section4}
One of the applications of the general-dual-based $\ell_1$-analysis
and its error bound analysis is in the optimal-dual-based
$\ell_1$-analysis approach as we briefly discussed in the
introduction. Recall that our goal is to solve a constrained
optimization problem of this form\footnote{For simplicity of
notations, we replace $\tilde{\fbf}$ by $\fbf$ in this section.}:
\begin{equation}
 \label{generaldualbasedL1analysisI}
 \hat{\fbf}=\underset {\fbf\in\Rnum^{n},\  \gbf\in\Rnum^{d}} {\textrm{argmin}}
 \Vert\bar{\Dbf}^{*}\fbf + \Pbf \gbf\Vert_{1} \ \ \ \  s.t. \ \
  \Vert\ybf-\Phibf\fbf\Vert_{2} \leq \epsilon.
\end{equation}
It is well known that this problem is difficult to solve numerically
since the $\ell_1$ term involved in
\eqref{generaldualbasedL1analysisI} is nonsmooth and nonseparable.
In this section, we focus on applying the split Bregman iteration
\cite{Goldstein2009} and develop an iterative algorithm for solving
the optimal-dual-based $\ell_1$-analysis problem. Since our
derivation of this algorithm makes use of the Bregman iteration, we
include an outline of the basics of this technique in Appendix
\ref{Appendix1}.

\subsection{Optimal-dual-based $\ell_1$-analysis via Split Bregman Iteration}
The goal of the split Bregman method is to extend the utility of the
Bregman iteration to the minimization of problems involving multiple
$\ell_1$-regularization terms \cite{Goldstein2009} and
$\ell_1$-analysis \cite{CaiJ2009}. Here, we apply the split Bregman
iteration to solve the optimal-dual-based $\ell_1$-analysis problem
\eqref{generaldualbasedL1analysisI}. The basic idea is to introduce
an intermediate variable $\dbf$ such that $\dbf = \bar{\Dbf}^{*}\fbf
+ \Pbf \gbf$, and the term $\|\bar{\Dbf}^{*}\fbf + \Pbf \gbf\|_1$ in
\eqref{generaldualbasedL1analysisI} is separable and easy to
minimize.

To solve \eqref{generaldualbasedL1analysisI}, one can use the
Bregman iteration \eqref{BregmanforConstrainedIII} for the equality
constrained version of \eqref{generaldualbasedL1analysisI} with an
early stopping criterion
\begin{equation}\label{stopcriterion}
  \|\Phibf\fbf^k-\ybf\|_2 \leq \epsilon
\end{equation}
to find a good approximate solution of
\eqref{generaldualbasedL1analysisI}. This approach has already been
used and discussed in, for example, \cite{CaiJ2009},
\cite{Osher2005}, \cite{Yin2008}. The equality constrained version
of \eqref{generaldualbasedL1analysisI} is given by
\begin{equation}\label{generaldualbasedL1analysisequacon}
 \hat{\fbf}=\underset {\fbf\in\Rnum^{n},\  \gbf\in\Rnum^{d}} {\textrm{argmin}}
 \Vert\bar{\Dbf}^{*}\fbf + \Pbf \gbf\Vert_{1} \ \ \ \  s.t. \ \
  \Phibf\fbf=\ybf.
\end{equation}
Apply the Bregman iteration \eqref{BregmanforConstrainedIII} to the
constrained minimization problem
\eqref{generaldualbasedL1analysisequacon}, we obtain
\begin{equation}\label{SplitBregmanforConstrained}
  \left\{\begin{array}{l}  (\fbf^{k+1}, \gbf^{k+1}) =  {\textrm{argmin}_{\fbf, \gbf}}
  \|\bar{\Dbf}^{*}\fbf + \Pbf \gbf\|_1 + \frac{\mu}{2}\|\Phibf\fbf - \ybf + \cbf^k\|_2^2, \\ \cbf^{k+1} = \cbf^k + (\Phibf\fbf^{k+1} - \ybf),\end{array}  \right.
\end{equation}
for $k=0, 1, \ldots,$ starting with $\cbf^0 = \zebf$, $\gbf^0 =
\zebf$, and  $\fbf^0 = \zebf$. In the first step, we have to solve a
subproblem of this form
\begin{equation}\label{subproblem}
  \underset {\fbf,\ \gbf} {\textrm{min}}
  \|\bar{\Dbf}^{*}\fbf + \Pbf \gbf\|_1 + \frac{\mu}{2}\|\Phibf\fbf - \ybf +
  \cbf^k\|_2^2.
\end{equation}
This problem is equivalent to
\begin{equation}
  \label{GeneralL1regularizationEQ} \underset {\fbf,\  \gbf, \ \dbf} {\textrm{min}} \ \
  \|\dbf\|_1+ \frac{\mu}{2}\|\Phibf\fbf - \ybf +
  \cbf^k\|_2^2  \ \ s.t. \ \ \dbf = \bar{\Dbf}^{*}\fbf + \Pbf \gbf.
\end{equation}
Again, apply the Bregman iteration \eqref{BregmanforConstrainedIII}
to \eqref{GeneralL1regularizationEQ}, we have the following
two-phase algorithm for solving the subproblem \eqref{subproblem}
\begin{equation}\label{SBI_I}
  \left\{\begin{array}{l}
  (\fbf^{k+1}, \dbf^{k+1}, \gbf^{k+1})   =  {\textrm{argmin}_{\fbf, \dbf, \gbf}}
  \|\dbf\|_1+ \frac{\mu}{2}\|\Phibf\fbf - \ybf +
  \cbf^k\|_2^2  + \frac{\lambda}{2}\|\bar{\Dbf}^{*}\fbf + \Pbf \gbf-\dbf + \bbf^k\|_2^2, \\ \bbf^{k+1}  = \bbf^k + (\bar{\Dbf}^{*}\fbf^{k+1} + \Pbf \gbf^{k+1} - \dbf^{k+1}).
  \end{array}  \right.
\end{equation}

Since we have split the $\ell_1$ and $\ell_2$ components of the
subproblem involved in \eqref{SBI_I}, we can perform this
minimization efficiently by iteratively minimizing with respect to
$\fbf$, $\dbf$, and $\gbf$ separately. Thus we arrive at the following
three steps:
\begin{align} \label{step1}
  \text{Step 1}:  \fbf^{k+1} & = {\textrm{argmin}_{\fbf}}
  \frac{\mu}{2}\|\Phibf\fbf - \ybf +
  \cbf^k\|_2^2 + \frac{\lambda}{2}\|\bar{\Dbf}^{*}\fbf + \Pbf \gbf^k-\dbf^k +
  \bbf^k\|_2^2,\\ \label{step2}
   \text{Step 2}:  \dbf^{k+1} & = {\textrm{argmin}_{\dbf}}
  \|\dbf\|_1 + \frac{\lambda}{2}\|\dbf - \bar{\Dbf}^{*}\fbf^{k+1} - \Pbf \gbf^k -
  \bbf^k\|_2^2,\\ \label{step3}
  \text{Step 3}:  \gbf^{k+1} & = {\textrm{argmin}_{\gbf}}
  \frac{\lambda}{2}\|\Pbf \gbf + \bar{\Dbf}^{*}\fbf^{k+1} - \dbf^{k+1} +
  \bbf^k\|_2^2.
\end{align}

In Step $1$, because we have decoupled $\fbf$ from the $\ell_1$
portion of the problem, the optimization problem is now
differentiable. The optimality conditions to \eqref{step1} yield
\begin{equation}\label{steponeoptcond}
\mu \Phibf^*(\Phibf\fbf - \ybf +
  \cbf^k) + \lambda \bar{\Dbf} (\bar{\Dbf}^{*}\fbf + \Pbf \gbf^k-\dbf^k +
  \bbf^k) = 0.
\end{equation}
Thus we can compute
\begin{equation} \label{updatef}
  \fbf^{k+1} = (\mu\Phibf^*\Phibf + \lambda
  \bar{\Dbf}\bar{\Dbf}^{*})^{-1}[\mu \Phibf^*(\ybf -
  \cbf^k) + \lambda \bar{\Dbf} (\dbf^k - \Pbf \gbf^k -
  \bbf^k)].
\end{equation}

In Step $2$, there is no coupling between elements of $\dbf$. This
problem can be solved by a simple soft shrinkage, i.e.,
\begin{equation}\label{updated}
  \dbf^{k+1} = \text{shrink} (\bar{\Dbf}^{*}\fbf^{k+1}+\Pbf \gbf^k +
  \bbf^k, 1/\lambda),
\end{equation}
where the soft shrinkage operator is defined as
\begin{equation*}
\text{shrink}(\wbf_i, 1/\lambda) = \text{sign}(\wbf_i)\cdot
\text{max}(|\wbf_i|-1/\lambda, 0).
\end{equation*}

In Step $3$, the optimality conditions to \eqref{step3} lead to
\begin{equation} \label{optcondtionstep3}
\lambda \Pbf (\Pbf \gbf + \bar{\Dbf}^{*}\fbf^{k+1} - \dbf^{k+1} +
  \bbf^k) = 0.
\end{equation}
Since only $\Pbf\gbf^k$ is involved in the update of  $\fbf^k$,
$\dbf^k$, and $\bbf^k$, it is enough to derive an updating formula
for $\Pbf\gbf^k$
\begin{equation}
  \Pbf\gbf^{k+1} = \Pbf(\dbf^{k+1} - \bar{\Dbf}^{*}\fbf^{k+1} -
  \bbf^k).
\end{equation}

Therefore, we obtain the unconstrained split Bregman algorithm for
solving the subproblem \eqref{subproblem} as follows:
\begin{equation}\label{GSBI}
  \left\{\begin{array}{l} \text{for n = 1 to N}
  \\ \ \ \   \fbf^{k+1} = (\mu\Phibf^*\Phibf + \lambda
  \bar{\Dbf}\bar{\Dbf}^{*})^{-1}[\mu \Phibf^*(\ybf -
  \cbf^k) + \lambda \bar{\Dbf} (\dbf^{new} - \Pbf \gbf^{new} -
  \bbf^{k})], \\ \ \ \ \dbf^{k+1} = \text{shrink} (\bar{\Dbf}^{*}\fbf^{new}+\Pbf \gbf^{new} +
  \bbf^k, 1/\lambda), \\ \ \ \ \Pbf\gbf^{k+1} = \Pbf(\dbf^{new} - \bar{\Dbf}^{*}\fbf^{new} -
  \bbf^k), \\
  \text{end} \\
  \bbf^{k+1}  = \bbf^k + (\bar{\Dbf}^{*}\fbf^{k+1} + \Pbf \gbf^{k+1} -
  \dbf^{k+1}),
  \end{array}  \right.
\end{equation}
where $(\cdot)^{new}$ denotes either $(\cdot)^{k+1}$ if it is
available or $(\cdot)^{k}$ otherwise.

Ideally, we need to run infinite iterations ($N \rightarrow \infty$)
to obtain a convergent solution for the subproblem involved in
\eqref{SBI_I}. However, as pointed out in \cite{Goldstein2009}, it
is not desirable to solve this subproblem to full convergence.
Intuitively, the reason for this is that if the error in our
solution for this subproblem is small compared to $\|\bbf^k -
\bbf^\sharp\|_2$, where $\bbf^\sharp$ is the ``true $\bbf$'', then
this extra precision will be ``wasted'' when the Bregman parameter
is updated. In fact, it was found empirically in
\cite{Goldstein2009} that for many applications optimal efficiency
is obtained when only one iteration of the inner loop is performed
(i.e., $N = 1$ in \eqref{GSBI}). When $N=1$, the unconstrained split
Bregman iteration \eqref{GSBI} reduces to

\begin{equation}\label{SBINequal1}
  \left\{\begin{array}{l}
  \fbf^{k+1} = (\mu\Phibf^*\Phibf + \lambda
  \bar{\Dbf}\bar{\Dbf}^{*})^{-1}[\mu \Phibf^*(\ybf -
  \cbf^k) + \lambda \bar{\Dbf} (\dbf^k - \Pbf \gbf^k -
  \bbf^k)], \\ \dbf^{k+1} = \text{shrink} (\bar{\Dbf}^{*}\fbf^{k+1}+\Pbf \gbf^k +
  \bbf^k, 1/\lambda),  \\
   \Pbf\gbf^{k+1} = \Pbf(\dbf^{k+1} - \bar{\Dbf}^{*}\fbf^{k+1} -
  \bbf^k),\\ \bbf^{k+1}  = \bbf^k + (\bar{\Dbf}^{*}\fbf^{k+1} + \Pbf \gbf^{k+1} -
  \dbf^{k+1}).
  \end{array}  \right.
\end{equation}

Combining this inner solver with the outer iteration
\eqref{SplitBregmanforConstrained}, we obtain the constrained split
Bregman method for \eqref{generaldualbasedL1analysisequacon} as
follows:
\begin{equation}\label{SBCON}
  \left\{\begin{array}{l} \text{for n = 1 to} \ \ nInner
  \\ \ \ \   \fbf^{k+1} = (\mu\Phibf^*\Phibf + \lambda
  \bar{\Dbf}\bar{\Dbf}^{*})^{-1}[\mu \Phibf^*(\ybf -
  \cbf^k) + \lambda \bar{\Dbf} (\dbf^{new} - \Pbf \gbf^{new} -
  \bbf^{new})], \\ \ \ \ \dbf^{k+1} = \text{shrink} (\bar{\Dbf}^{*}\fbf^{new}+\Pbf \gbf^{new} +
  \bbf^{new}, 1/\lambda), \\ \ \ \ \Pbf\gbf^{k+1} = \Pbf(\dbf^{new} - \bar{\Dbf}^{*}\fbf^{new} -
  \bbf^{new}), \\ \ \ \ \bbf^{k+1}  = \bbf^{new} + (\bar{\Dbf}^{*}\fbf^{new} + \Pbf \gbf^{new} -
  \dbf^{new}),\\
  \text{end} \\
  \cbf^{k+1} = \cbf^k + (\Phibf\fbf^{k+1} - \ybf),
  \end{array}  \right.
\end{equation}
where $nInner$ denotes the number of inner loops. A formal statement
of the split Bregman iteration for optimal-dual-based
$\ell_1$-analysis is given in Algorithm $1$ in which $\fbf$ denotes
the recovered signal and $\dbf$ is the recovered coefficient vector.
\begin{algorithm} \label{algorithm1}
\caption{Split Bregman Iteration for optimal-dual-based
$\ell_1$-analysis}
  \textbf{Initialization:} {$\fbf^0=\zebf$, $\dbf^0=\bbf^0=\Pbf\gbf^0=\zebf$, $\cbf^0=\zebf$, $\mu>0, \lambda>0, nOuter, nInner,
  tol$}\;
  \While{$k < nOuter$ and $\|\Phibf \fbf^k - \ybf\|_2 > tol$}
  { \For {$n=1:nInner$}{
  $\fbf^{k+1} = (\mu\Phibf^*\Phibf + \lambda
  \bar{\Dbf}\bar{\Dbf}^{*})^{-1}[\mu \Phibf^*(\ybf -
  \cbf^k) + \lambda \bar{\Dbf} (\dbf^{new} - \Pbf \gbf^{new} -
  \bbf^{new})]$\;
  $\dbf^{k+1} = \text{shrink} (\bar{\Dbf}^{*}\fbf^{new}+\Pbf \gbf^{new} +
  \bbf^{new}, 1/\lambda)$\;
  $\Pbf\gbf^{k+1} = \Pbf(\dbf^{new} - \bar{\Dbf}^{*}\fbf^{new} -
  \bbf^{new})$\;
  $\bbf^{k+1}  = \bbf^{new} + (\bar{\Dbf}^{*}\fbf^{new} + \Pbf \gbf^{new} -
  \dbf^{new})$\;
  }
  $\cbf^{k+1} = \cbf^k + (\Phibf\fbf^{k+1} - \ybf)$\;
  Increase $k$\;
  }
\end{algorithm}

\noindent {\bf Remark 6:} \ If $\Dbf$ is a Parseval frame and
$\Pbf\gbf \equiv \zebf$, then Algorithm \ref{algorithm1} reduces to
the split Bregman iteration for the standard $\ell_1$-analysis
approach as discussed in \cite{CaiJ2009}.

\subsection{Computational Complexity Analysis}

We discuss briefly the computational complexity of Algorithm
\ref{algorithm1} in this subsection. For simplicity of the
discussion, we assume that $\Dbf$ is a Parseval frame. This stems
from the fact that Parseval frames are often favored in practical
situations. Let $\Qbf \equiv (\mu\Phibf^*\Phibf + \lambda
\Ibf_n)^{-1}$. Define $\Cset_{\Phibf}$, $\Cset_{\Dbf}$, and
$\Cset_{\Qbf}$ to be the complexity of applying $\Phibf$ or
$\Phibf^*$, $\Dbf$ or $\Dbf^*$, and $\Qbf$ to a vector,
respectively. The complexity of the first step in the inner loop is
$\Cset_{\Qbf}+\Cset_{\Phibf}+\Cset_{\Dbf}$. Here the cost of vector
operations is omitted since most of the work is in matrix-vector
products for large-scale problems. Steps $2$ and $3$ in the inner
loop require the application of $\Dbf$ or $\Dbf^*$ one and two times
respectively (the matrix-vector multiplication $\Dbf^*\fbf^{new}$
from the $\dbf^k$ update can be reused). The last step in the inner
loop only involves vector operations. Hence, the total complexity of
a single inner loop is $\Cset_{\Qbf}+\Cset_{\Phibf}+4\Cset_{\Dbf}$.
Furthermore, the total cost for an outer iteration is $nInner \times
(\Cset_{\Qbf}+\Cset_{\Phibf}+4\Cset_{\Dbf})+\Cset_{\Phibf}$.

The calculations above are in some sense overly pessimistic. In
compressed sensing applications, one often encounters a matrix
$\Phibf$ as a submatrix of a unitary transform, which admits for
easy storage and fast multiplication. Important examples include the
partial Discrete Fourier Transform (DFT). By applying the matrix
inversion lemma, it is not hard to show that $\Qbf =
\frac{1}{\lambda}\left(\Ibf_n-\frac{\mu}{\lambda+\mu}\Phibf^*\Phibf\right)$.
Thus computing $\fbf^{k+1}$ in the inner loop is cheap since no
matrix inversion is required. In this case, the total costs for a
single inner loop and an outer iteration become
$2\Cset_{\Phibf}+4\Cset_{\Dbf}$ and $nInner \times
(2\Cset_{\Phibf}+4\Cset_{\Dbf})+ \Cset_{\Phibf}$, respectively.
Another important example in compressed sensing is when $\Phibf$ is
a random matrix. It is well known that in this case the eigenvalues
of $\Phibf^*\Phibf$ are well clustered. Then applying $\Qbf =
(\mu\Phibf^*\Phibf + \lambda \Ibf_n)^{-1}$ to a vector can be
computed very efficiently via a few conjugate gradient (CG) steps
\cite{Becker2011}.

As discussed earlier, if $\Pbf\gbf\equiv\zebf$, then Algorithm
\ref{algorithm1} reduces to the split Bregman iteration for the
standard $\ell_1$-analysis approach. Evidently, the corresponding
complexity for a single inner loop reduces to
$\Cset_{\Qbf}+\Cset_{\Phibf}+2\Cset_{\Dbf}$ (step $3$ disappears in
this case). This means that the cost for an inner loop decreases by
$2\Cset_{\Dbf}$. It should be pointed out that, in practical
applications, there is often a fast algorithm for applying $\Dbf$
and $\Dbf^*$, e.g., a fast wavelet transform or a fast short-time
Fourier transform \cite{Mallat1998}, which makes applying of $\Dbf$
and $\Dbf^*$ low-cost.

\section{Numerical Results} \label{section5}
In this section, we present some numerical experiments illustrating
the effectiveness of signal recovery via the optimal-dual-based
$\ell_1$-analysis approach. Our results confirm that when signals
are sparse with respect to redundant frames, the optimal-dual-based
$\ell_1$-analysis approach often achieves better recovery
performance than the standard $\ell_1$-analysis method, and that
this recovery is robust with respect to noise.

In these experiments, we use two types of frames: Gabor frames and a
concatenation of the coordinate and Fourier bases. The
optimal-dual-based $\ell_1$-analysis problems are solved by
Algorithm $1$, while the $\ell_1$-analysis problems are by Algorithm
$1$ with $\Pbf\gbf\equiv \zebf$. The sensing matrix $\Phibf$ is a
Gaussian matrix with $m=32, n=128$. The noise $\zbf$ has a white
Gaussian distribution with zero-mean and second-order moments
$\sigma^2\Ibf_m$.

\noindent {\bf Example 1: Gabor Frames.}\ \  Recall that for a window function
$g$ and positive time-frequency shift parameters $\alpha$ and
$\beta$, the Gabor frame is given by
\begin{equation}
  \{g_{_{l,k}}(t) = g(t-k\alpha)e^{2\pi il\beta t}\}_{l,k}.
\end{equation}
For many imaging systems such as radar and sonar, the received signal $f$ often has
the form
\begin{equation}
  f(t) = \sum_{k=1}^s a_k g(t-t_k)e^{i\omega_k t}.
\end{equation}
Evidently, if $s$ is small, $f$ is sparse with respect to some Gabor
frame. In this experiment, we construct a Gabor dictionary with
Gaussian windows, oversampled by a factor of $20$ so that
$d=20\times n=2560$. The tested signal $\fbf$ is sparse with respect
to the constructed Gabor frame with sparsity $s =
\text{ceil}(0.2\times m) = 7$. The positions of the nonzero entries
of the coefficient vector $\xbf$ are selected uniformly at random,
and each nonzero value is sampled from standard Gaussian
distribution. We set $\lambda = \mu =1$, $tol = 10^{-6}$, and
$nOuter = 200$ in Algorithm \ref{algorithm1}.

 \begin{figure}
 \begin{center}
 \begin{tabular}{cc}
 \includegraphics[height=6cm]
 {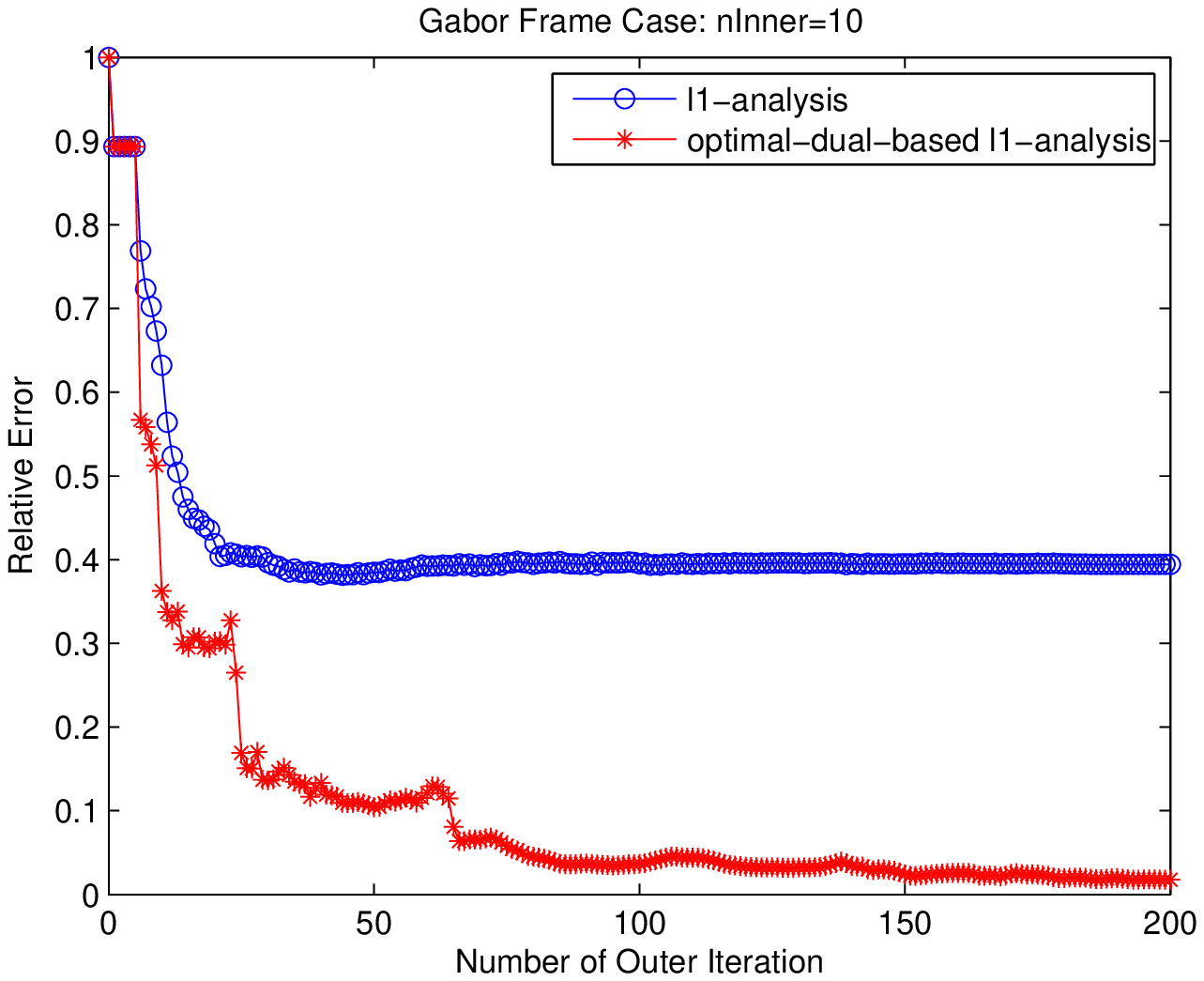}&
 \includegraphics[height=6cm]
 {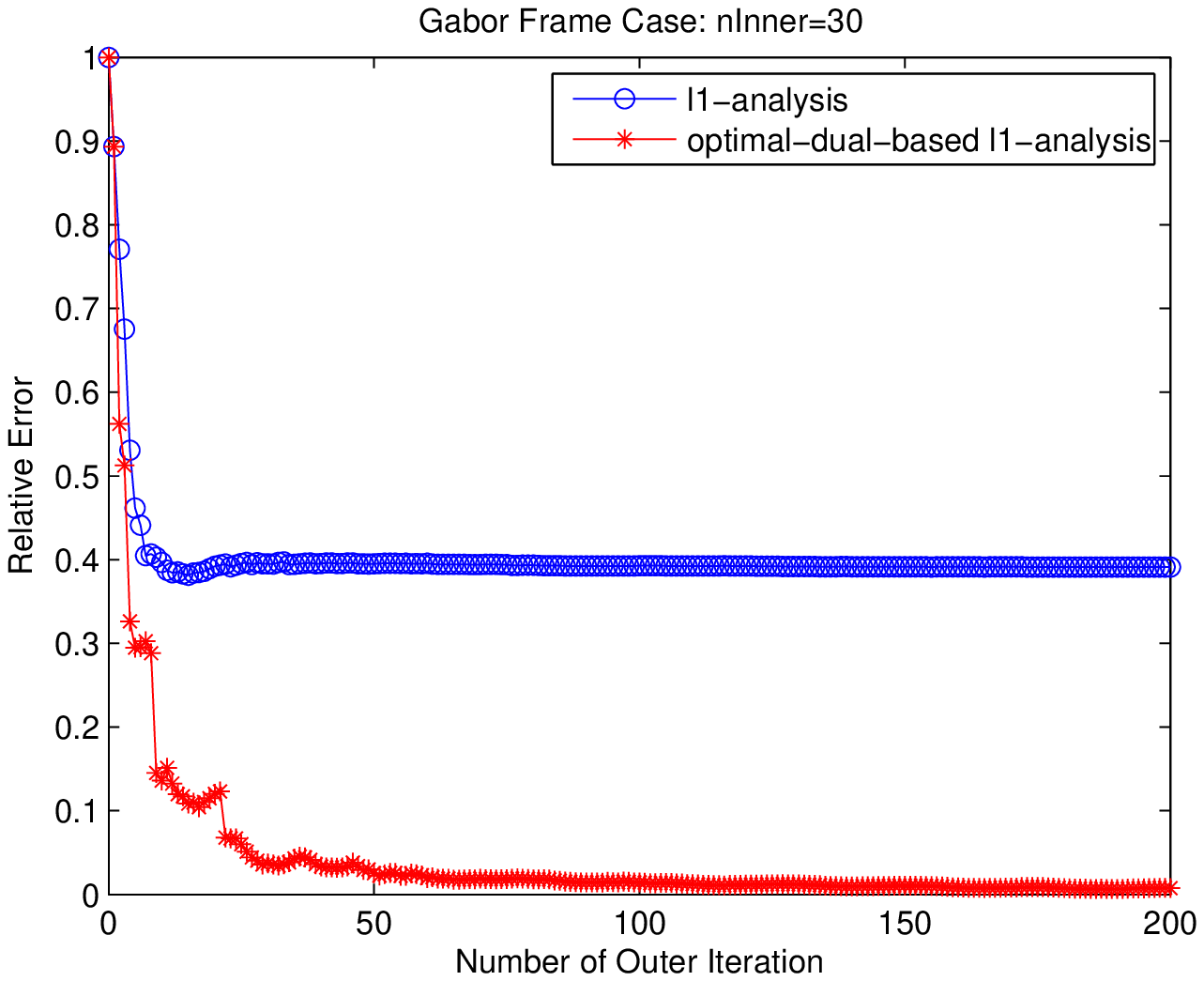}
 \end{tabular}
 \end{center}
 \caption{\small{Relative error vs. outer iteration number (without noise). The relative error at iteration $k$ is defined as ${\|\fbf-\fbf^k\|_2}/{\|\fbf\|_2}$,  where $\fbf^k$ is the approximation at iteration $k$ and $\fbf$ is the true solution. The optimal-dual-based $\ell_1$-analysis problems are solved by Algorithm $1$, while the
$\ell_1$-analysis problems are by Algorithm $1$ with $\Pbf\gbf\equiv\zebf$. Left: Results for $nInner =10$. Right: Results for $nInner =30$. } }
 \label{Fig1}
 \end{figure}

Figure \ref{Fig1} shows the relative error vs. outer iteration
number for both approaches in noiseless case\footnote{The problem of
the same setting is tested many times with randomly generated
examples (as detailed). These test results are similar to that of
Figure \ref{Fig1}. To facilitate the explanation, we only show the
result for one random instance.}. It is not hard to see that the
optimal-dual-based $\ell_1$-analysis approach is more effective than
the standard $\ell_1$-analysis approach. This is because the
optimization of the former is not only over the signal space but
also over all dual frames of $\Dbf$. In other words, there exists
some optimal dual frame ${\tilde {\bf D}}_\text{o}$ which produces
sparser coefficients than the canonical dual frame does for the
tested signal. Since ${\tilde {\bf D}}_\text{o}$ is also a dual
frame, it then follows from \eqref{errorboundforgenernalL1analysis}
that a better recovery performance can be achieved by the
optimal-dual-based $\ell_1$-analysis approach.

The convergence performance of Algorithm \ref{algorithm1} can also
be observed in Figure \ref{Fig1}. The proposed algorithm converges
quickly for the first several iterations, but then slows down as the
true solution is near. It is also evident that as $nInner$
increases, the proposed algorithm requires less outer iterations to
converge. This is because the subproblem involved in
\eqref{SplitBregmanforConstrained} is solved more accurately as
$nInner$ increases, the need for outer Bregman updates is naturally
less in order to reach the steady state. It is worth noting that as
$nInner$ increases, the corresponding complexity for an outer
iteration also increases.

Our next simulation is to show the robustness of the
optimal-dual-based $\ell_1$-analysis with respect to noise in the
measurements. Figure \ref{Fig3} shows the recovery error as a
function of the noise level. As expected, the relation is linear. We
also see that the constant $C_0$ in Theorem \ref{thm1} for the
optimal-dual-based $\ell_1$-analysis is larger than that for the
standard $\ell_1$-analysis.  But the overall performance of the
optimal-dual-based method is still much better.

 \begin{figure}
 \begin{center}
 \begin{tabular}{c}
 \includegraphics[height=8cm]
 {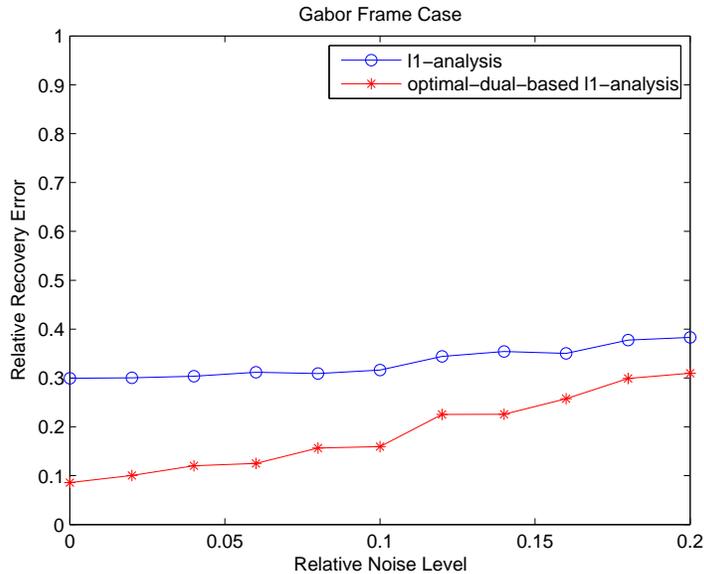}
 \end{tabular}
 \end{center}
 \caption{\small{Relative recovery error vs. relative noise level, averaged over $5$ trials. The relative recovery error is defined
 as ${\|\fbf-\hat{\fbf}\|_2}/{\|\fbf\|_2}$ and the relative noise level is defined as
 $\sqrt{m}\sigma/\|\Phibf\fbf\|_2$. The sparsity level is $s = \text{ceil}(0.2 \times m)=7$. Set $\lambda = \mu =1$, $tol
 = 10^{-6}$, $nInner = 30$, and $nOuter = 200$ in Algorithm
 \ref{algorithm1}.}}
 \label{Fig3}
 \end{figure}

We also test the performance of the optimal-dual-based
$\ell_1$-analysis with respect to the sparsity level of the
coefficient vector $\xbf$. Figure \ref{Fig4} shows that the
optimal-dual-based $\ell_1$-analysis outperforms the standard
$\ell_1$-analysis at different sparsity levels. The plot also shows
that the performance curve of the optimal-dual-based
$\ell_1$-analysis exhibits a threshold effect. When $\varrho\equiv
s/m \leq 0.2$, the optimal-dual-based $\ell_1$-analysis recovers the
signal accurately. When $\varrho \geq 0.2$, the performance degrades
as $\varrho$ increases.

 \begin{figure}
 \begin{center}
 \begin{tabular}{c}
 \includegraphics[height=8cm]
 {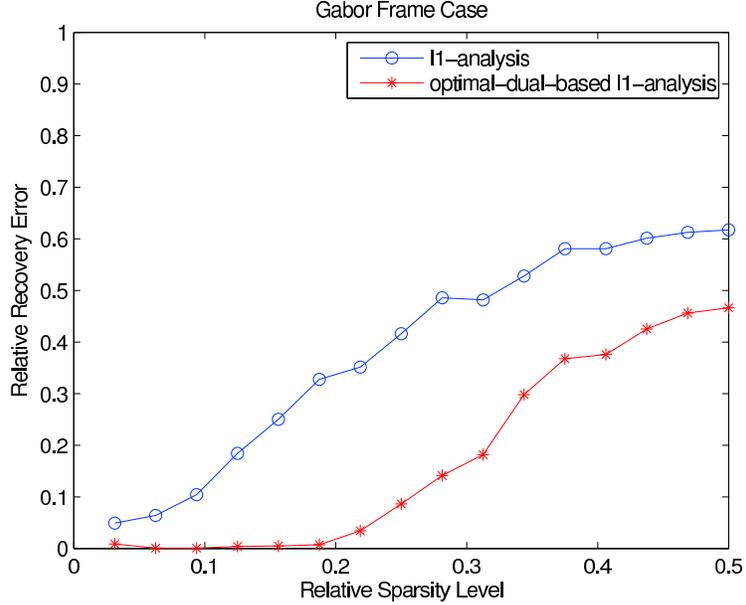}
 \end{tabular}
 \end{center}
 \caption{\small{Relative recovery error vs. relative sparsity level of $\xbf$, averaged over $100$ trials. The relative recovery error is defined
 as ${\|\fbf-\hat{\fbf}\|_2}/{\|\fbf\|_2}$ and the relative sparsity level $\varrho$ is defined
 as $\varrho = s/m$. No noise $\sigma^2=0$. The parameters for Algorithms \ref{algorithm1} are the same as in Figure \ref{Fig3}. }}
 \label{Fig4}
 \end{figure}

\noindent {\bf Example 2: Concatenations.}\ \ In many applications,
signals of interest are sparse over several orthonormal bases (or
frames), it is natural to use a dictionary $\Dbf$ consisting of a
concatenation of these bases (or frames). In this experiment, we
consider a dictionary consisting of the coordinate and Fourier
bases, i.e., $\Dbf = [\Ibf, \Fbf]$. The tested signal $\fbf$ is a
linear combination of spikes and sinusoids with sparsity $s =
\text{ceil}(0.2\times m) = 7$. The positions of the nonzero entries
of $\xbf$ are selected uniformly at random, and each nonzero value
is sampled from standard Gaussian distribution. We set $\lambda =
\mu =1$, $tol = 10^{-12}$ and $nOuter = 100$ in Algorithm
\ref{algorithm1}.

Figure \ref{Fig2} shows that the optimal-dual-based
$\ell_1$-analysis approach achieves much better recovery performance
than that of the standard $\ell_1$-analysis approach.  The latter
fundamentally fails with a relative error at about $80\%$. Such a
failure is not surprising since $\Dbf^* \fbf$ in this case is not at
all sparse.  This is due to the fact that, in this very example, the
component that is sparse in one basis is not at all in the other.

Figures \ref{Fig5} and \ref{Fig6} show the performance of the
optimal-dual-based $\ell_1$-analysis with respect to the noise level
and the sparsity level for the $\Ibf+\Fbf$ case, respectively. The
results are similar to that for the Gabor frame case. We also see
that the standard $\ell_1$-analysis fails at all noise levels and
sparsity levels in this case.

 \begin{figure}
 \begin{center}
 \begin{tabular}{cc}
 \includegraphics[height=6cm]
 {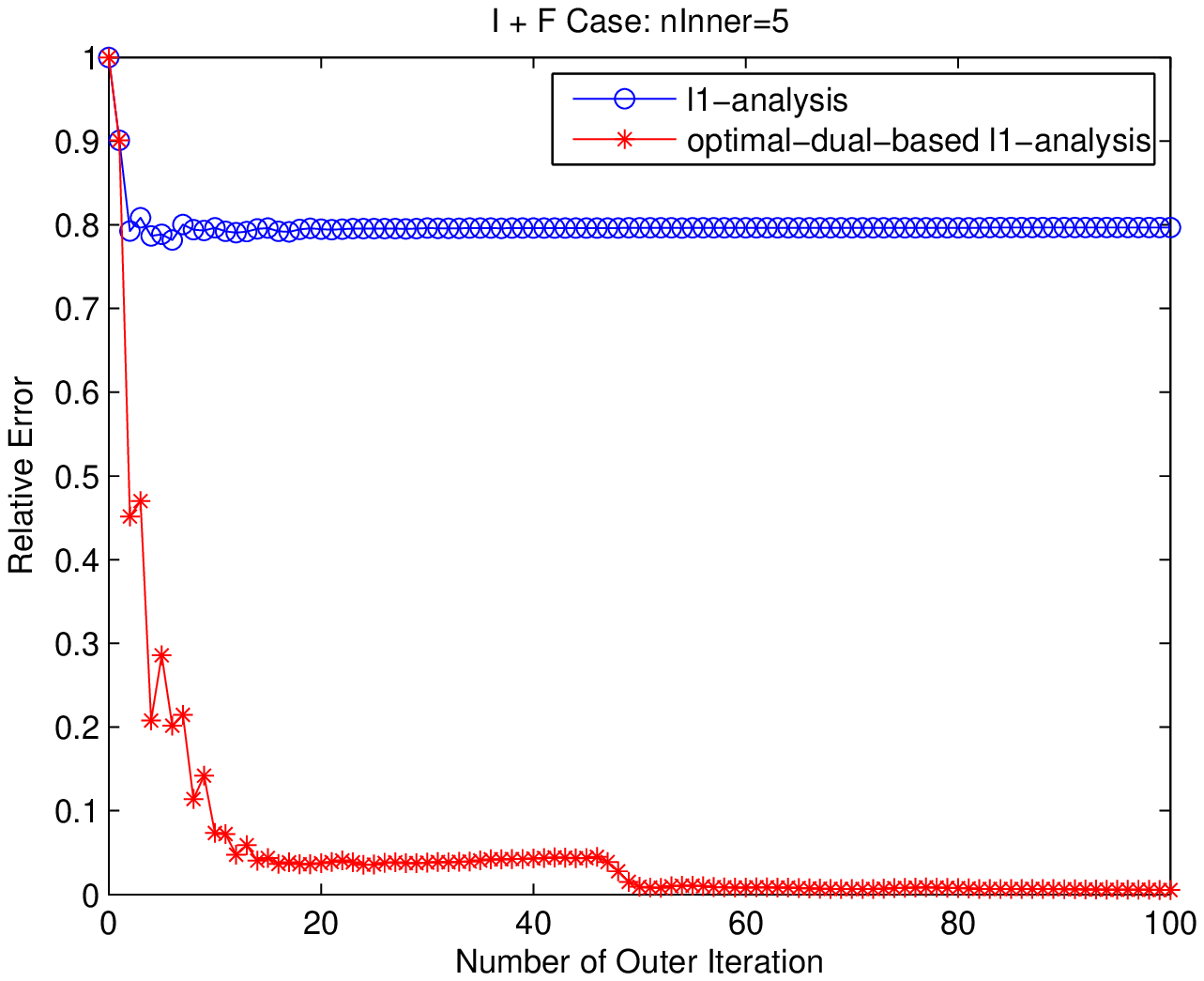}&
 \includegraphics[height=6cm]
 {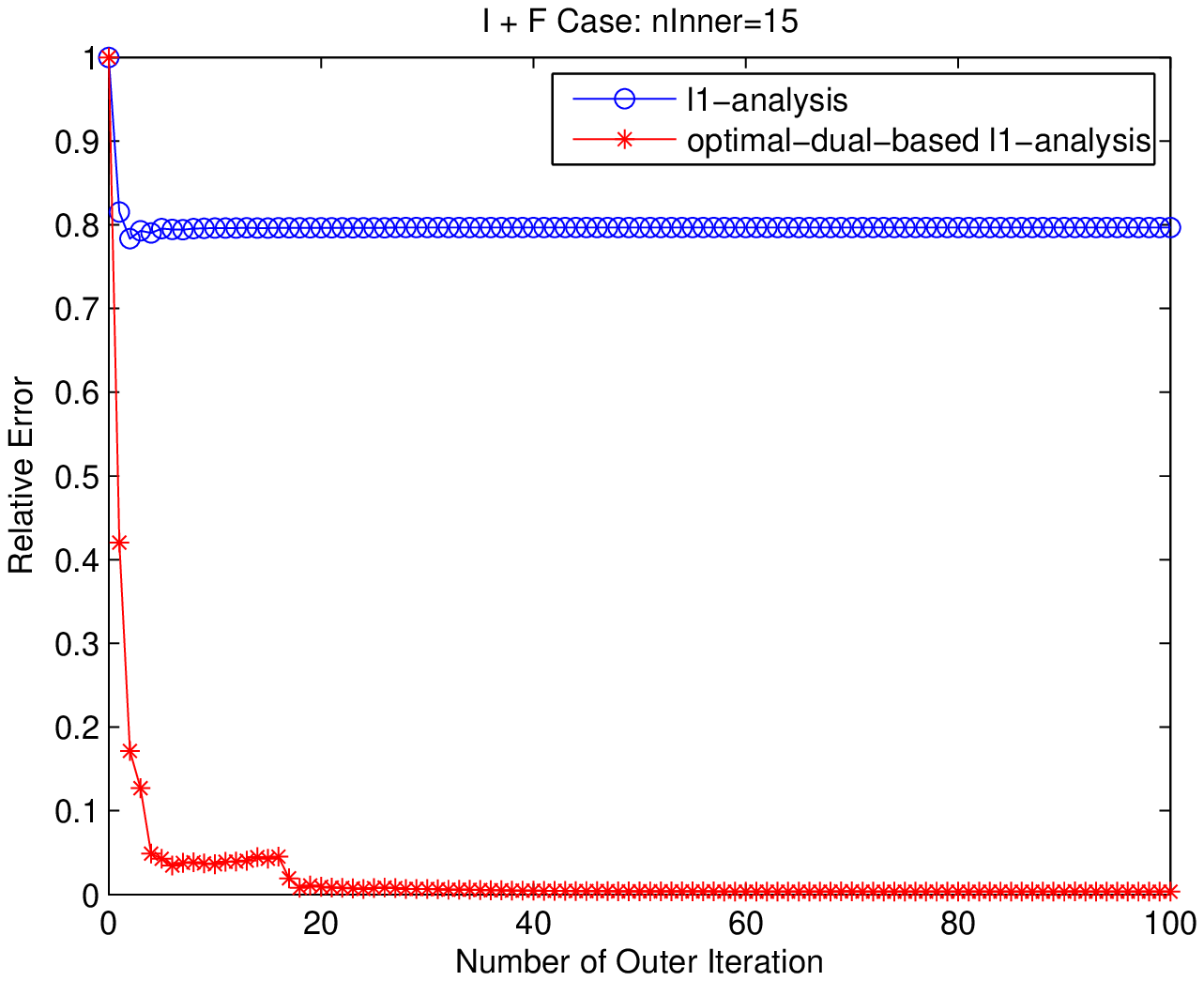}
 \end{tabular}
 \end{center}
 \caption{\small{Relative error vs. outer iteration number (without noise). The relative error at iteration $k$ is defined as ${\|\fbf-\fbf^k\|_2}/{\|\fbf\|_2}$, where $\fbf^k$ is the approximation at iteration $k$ and $\fbf$ is the true solution. The optimal-dual-based
$\ell_1$-analysis problems are solved by Algorithm $1$, while the
$\ell_1$-analysis problems are by Algorithm $1$ with $\Pbf\gbf\equiv
\zebf$. Left: Results for $nInner =5$. Right: Results for $nInner
=15$.}}
 \label{Fig2}
 \end{figure}

 \begin{figure}
 \begin{center}
 \begin{tabular}{c}
 \includegraphics[height=8cm]
 {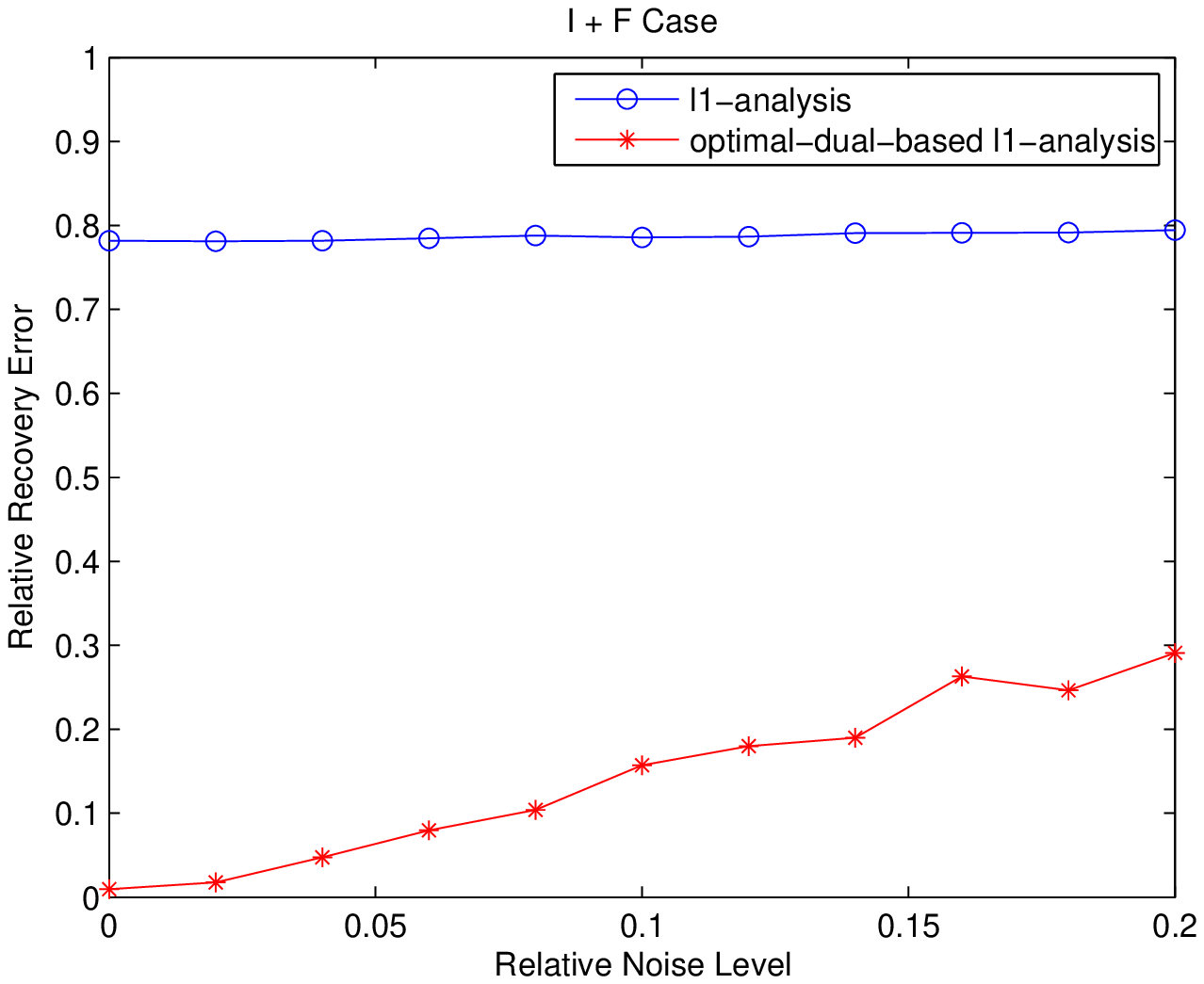}
 \end{tabular}
 \end{center}
 \caption{\small{Relative recovery error vs. relative noise level, averaged over $5$ trials. The relative recovery error is defined
 as ${\|\fbf-\hat{\fbf}\|_2}/{\|\fbf\|_2}$ and the relative noise level is defined as
 $\sqrt{m}\sigma/\|\Phibf\fbf\|_2$. The sparsity level is $s = \text{ceil}(0.2 \times m)=7$. Set $\lambda = \mu =1$, $tol
 = 10^{-12}$, $nInner = 15$, and $nOuter = 100$ in Algorithm
 \ref{algorithm1}.}}
 \label{Fig5}
 \end{figure}

 \begin{figure}
 \begin{center}
 \begin{tabular}{c}
 \includegraphics[height=8cm]
 {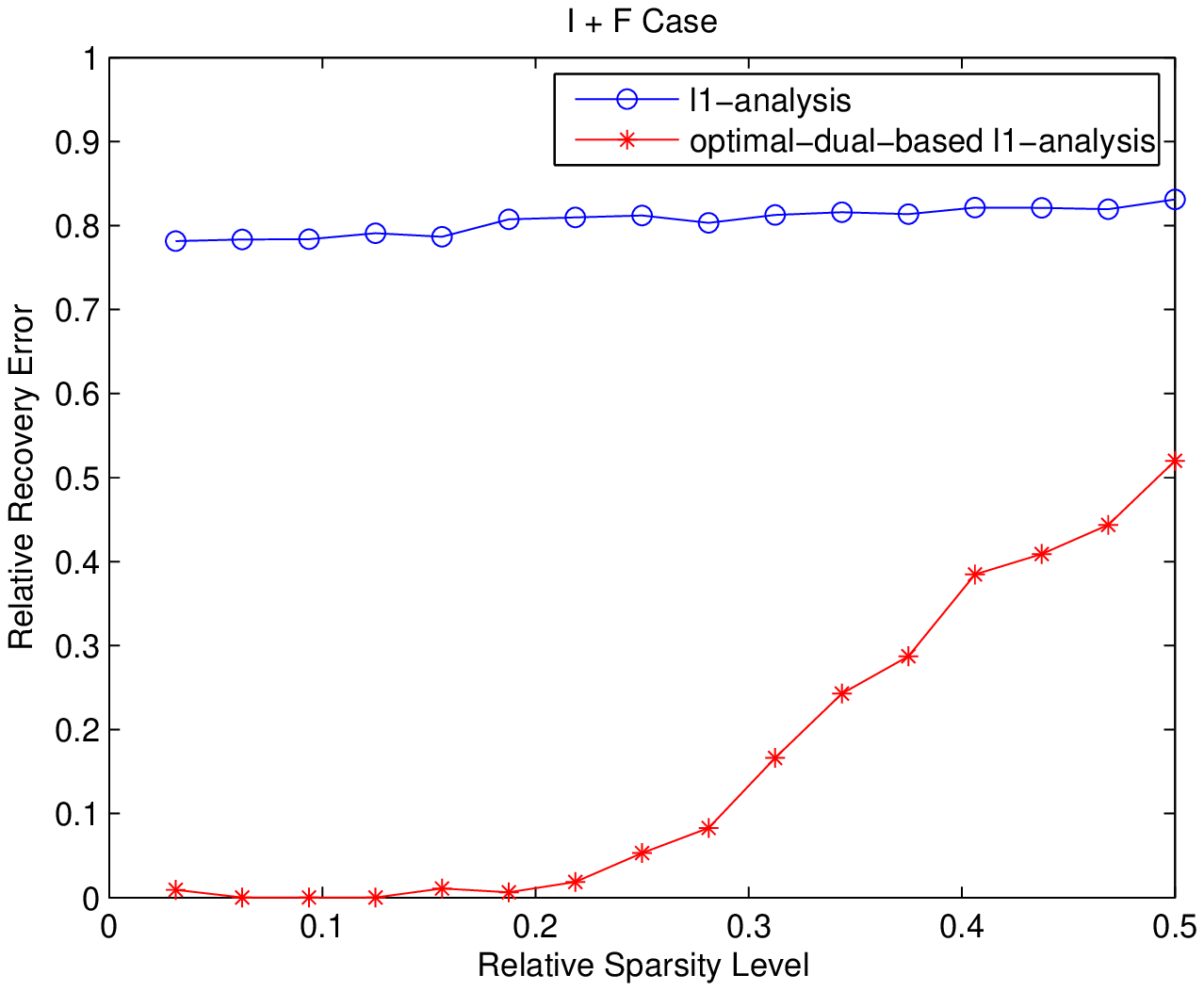}
 \end{tabular}
 \end{center}
 \caption{\small{Relative recovery error vs. relative sparsity level of $\xbf$, averaged over $100$ trials. The relative recovery error is defined
 as ${\|\fbf-\hat{\fbf}\|_2}/{\|\fbf\|_2}$ and the relative sparsity level $\varrho$ is defined
 as $\varrho = s/m$. No noise $\sigma^2=0$. The parameters for Algorithms \ref{algorithm1} are the same as in Figure \ref{Fig5}.}}
 \label{Fig6}
 \end{figure}

\section{Conclusions}\label{section6}
We extend the $\ell_1$-analysis approach to a more general case in
which the analysis operator can be any dual frame of $\Dbf$. We call
it the general-dual-based approach.  Error performance bound is
established. Improved sufficient signal recovery conditions are
provided. To demonstrate the effectiveness of the general-dual-based
approach, we also propose an optimal-dual-based $\ell_1$-analysis
approach to recover the signal directly. The optimization of this
method is not only over the signal space but also over all dual
frames of $\Dbf$. We have seen that when signals are sparse with
respect to frames that are redundant and coherent, this
optimal-dual-based approach often achieves better recovery
performance than that of the standard $\ell_1$-analysis.  By
applying the split Bregman iteration, we develop an iterative
algorithm for solving the optimal-dual-based $\ell_1$-analysis
problem. The proposed algorithm is very fast when proper parameter
values are used and easy to code. Our ongoing work includes the
performance analysis of the $\ell_1$-synthesis approach by virtue of
the principle of the optimal-dual-based $\ell_1$-analysis approach
we proposed, and further refinements of the algorithm.

\appendix
\section{The Basics of the Bregman Iteration} \label{Appendix1}
The Bregman iteration is a technique that originated in functional
analysis for finding extrema of convex functionals
\cite{Bregman1967}. The Bregman iteration was first introduced to
image processing in \cite{Osher2005}, where it was applied to total
variation (TV) denoising. Then, in \cite{Cai2009a}, \cite{Yin2008},
it was shown to be remarkably successful for $\ell_1$ minimization
problems in compressed sensing. Here we briefly review this
technique. More details about the Bregman iteration can be found in
e.g., \cite{Cai2009a}, \cite{CaiJ2009}, \cite{Osher2005},
\cite{Yin2008}.

The Bregman iteration relies on the concept of the \textit{Bregman
distance} \cite{Bregman1967}. The Bregman distance of a convex
function $J(\ubf)$ between points $\ubf$ and $\vbf$ is defined as
\begin{equation}
  \label{Bregmandistance}
  B_J^\pbf(\ubf,\vbf) = J(\ubf) - J(\vbf) - \langle \ubf-\vbf, \pbf \rangle,
\end{equation}
where $\pbf \in \partial J(\vbf)$ is some subgradient in the
subdifferential of $J$ at the point of $\vbf$. Clearly,
$B_J^\pbf(\ubf,\vbf)$ is not a distance in the usual sense, since
$B_J^\pbf(\ubf,\vbf) \neq B_J^\pbf(\vbf, \ubf)$ in general. However,
it does measure the closeness between $\ubf$ and $\vbf$ in the sense
that $B_J^\pbf(\ubf,\vbf)\geq 0$ and $B_J^\pbf(\ubf,\vbf)\geq
B_J^\pbf(\wbf,\vbf)$ for all points $\wbf$ on the line segment
connecting $\ubf$ and $\vbf$.

First, consider the following unconstrained optimization problem
\begin{equation}
  \label{optimizationproblemUN} \underset {\ubf} {\textrm{min}} \ \
  J(\ubf)+ H(\ubf),
\end{equation}
where $J(\ubf)$ is some convex function and $H(\ubf)$ is some convex
and differentiable function with ${\textrm{argmin}_\ubf}H(\ubf)=0$.

Instead of directly solving \eqref{optimizationproblemUN}, the
Bregman iteration iteratively solves
\begin{align} \label{BregmanIterationI}
  \ubf^{k+1} & = \underset {\ubf} {\textrm{argmin}} \ \
  B_J^{\pbf^{k}}(\ubf,\ubf^{k})+ H(\ubf), \notag \\ & = \underset {\ubf} {\textrm{argmin}} \ \
  J(\ubf)-J(\ubf^k)- \langle \ubf-\ubf^k, \pbf^k \rangle + H(\ubf),
\end{align}
for $k=0, 1, \ldots,$ starting from $\ubf^0=\zebf$ and
$\pbf^0=\zebf$. In \eqref{BregmanIterationI}, the updating formula
for $\pbf^k$ is based on the optimality conditions of
\eqref{BregmanIterationI}. Since $\ubf^{k+1}$ minimizes
\eqref{BregmanIterationI}, then $0\in
\partial \{B_J^{\pbf^{k}}(\ubf,\ubf^{k})+ H(\ubf)\}$, where this subdifferential is
evaluated at $\ubf^{k+1}$, i.e.,
\begin{equation*}
  0 \in \partial J(\ubf^{k+1}) - \pbf^{k} + \nabla H(\ubf^{k+1}).
\end{equation*}
This leads to
\begin{equation}
  \label{BregmanIterationII} \pbf^{k+1} = \pbf^{k} - \nabla H(\ubf^{k+1}) \in \partial J(\ubf^{k+1}).
\end{equation}
Combining \eqref{BregmanIterationI} and \eqref{BregmanIterationII}
yields the Bregman iteration:
\begin{equation}\label{BregmanforUnconstrained}
  \left\{\begin{array}{l} \ubf^{k+1} = {\textrm{argmin}_\ubf}
    B_J^{\pbf^{k}}(\ubf,\ubf^{k})+ H(\ubf),  \\ \pbf^{k+1} = \pbf^{k} - \nabla H(\ubf^{k+1}),\end{array}  \right.
\end{equation}
for $k=0, 1, \ldots,$ starting with $\ubf^0 = \zebf$ and $\pbf^0 =
\zebf$.

The convergence of the Bregman iteration
\eqref{BregmanforUnconstrained} was analyzed in \cite{Osher2005}. In
particular, it was shown that, under fairly weak assumptions on
$J(\ubf)$ and $H(\ubf)$, $H(\ubf^k)\rightarrow 0$ as $k \rightarrow
\infty$.

We then show that the Bregman iteration can also be used to solve
the general constrained convex minimization problem:
\begin{equation}\label{optimizationproblemC}
 \underset {\ubf} {\textrm{min}} \ \
  J(\ubf) \ \ s.t. \ \ \Phibf\ubf = \ybf,
\end{equation}
where $J(\ubf)$ denotes some convex function and $\Phibf$ is some
linear operator.

Traditionally, this problem may be solved by a continuation method,
where we solve sequentially the unconstrained problems
\begin{equation}\label{GeneraloptmodelUCC}
\underset {\ubf} {\textrm{min}} \ \
  J(\ubf) + \frac{\lambda_k}{2}\|\Phibf\ubf - \ybf\|_2^2,
\end{equation}
where $\lambda_1<\lambda_2<\ldots<\lambda_K$ is an increasing
sequence of penalty function weights \cite{Boyd2004}. In order to
enforce that $\Phibf\ubf \approx \ybf$, we must choose $\lambda_K$
to be extremely large. However, choosing a large value for
$\lambda_k$ may make \eqref{GeneraloptmodelUCC} extremely difficult
to solve numerically \cite{Goldstein2009}.

The Bregman iteration provides another way to transfer the
constrained problem \eqref{optimizationproblemC} into a series of
unconstrained problems. To this end, we first convert
\eqref{optimizationproblemC} into an unconstrained optimization
problem using a quadratic penalty function:
\begin{equation}\label{GeneraloptmodelUC}
\underset {\ubf} {\textrm{min}} \ \
  J(\ubf) + \frac{\lambda}{2}\|\Phibf\ubf - \ybf\|_2^2.
\end{equation}
Then we apply the Bregman iteration \eqref{BregmanforUnconstrained}
and iteratively minimize:
\begin{equation}\label{BregmanforConstrainedI}
  \left\{\begin{array}{l} \ubf^{k+1} =  {\textrm{argmin}_\ubf}
  B_J^{\pbf^{k}}(\ubf,\ubf^{k})+ \frac{\lambda}{2}\|\Phibf\ubf - \ybf\|_2^2,  \\ \pbf^{k+1} = \pbf^{k} - \lambda \Phibf^*(\Phibf\ubf^{k+1} -
\ybf),\end{array}  \right.
\end{equation}
for $k=0, 1, \ldots,$ starting with $\ubf^0 = \zebf$ and $\pbf^0 =
\zebf$.

By change of variable, this seemingly complicated iteration
\eqref{BregmanforConstrainedI} can be reformulated into a simplified
form \cite{CaiJ2009}:
\begin{equation}\label{BregmanforConstrainedIII}
  \left\{\begin{array}{l}  \ubf^{k+1} =  {\textrm{argmin}_{\ubf}}
  J(\ubf) + \frac{\lambda}{2}\|\Phibf\ubf - \ybf + \bbf^k\|_2^2, \\ \bbf^{k+1} = \bbf^k + (\Phibf\ubf^{k+1} - \ybf),\end{array}  \right.
\end{equation}
for $k=0, 1, \ldots,$ starting with $\bbf^0 = 0$ and $\ubf^0 = 0$.

Indeed, by $\pbf^0 = \zebf$ and induction on $\pbf^k$, we obtain
$\pbf^{k} = -\lambda \Phibf^*\sum_{j=1}^{k}(\Phibf\ubf^{j} - \ybf)$.
Substituting this into the first step of
\eqref{BregmanforConstrainedI} yields
\begin{align}
 B_J^{\pbf^{k}}(\ubf,\ubf^{k})+ \frac{\lambda}{2}\|\Phibf\ubf - \ybf\|_2^2 & = J(\ubf) - J(\ubf^k) -
 \langle \ubf-\ubf^k, \pbf^k \rangle + \frac{\lambda}{2}\|\Phibf\ubf - \ybf\|_2^2 \notag \\
 & = J(\ubf) - \langle \ubf, \pbf^k \rangle + \frac{\lambda}{2}\|\Phibf\ubf - \ybf\|_2^2 + C_2 \notag \\
 & = J(\ubf) + \lambda \langle \Phibf\ubf, \sum_{j=1}^{k}(\Phibf\ubf^{j} - \ybf) \rangle + \frac{\lambda}{2}\|\Phibf\ubf - \ybf\|_2^2 + C_2 \notag \\
 & = J(\ubf) + \frac{\lambda}{2}\left\|\Phibf\ubf - \ybf + \sum_{j=1}^{k}(\Phibf\ubf^{j} - \ybf)\right\|_2^2 +
 C_3,
\end{align}
where $C_2$ and $C_3$ are independent of $\ubf$. By the definition
of $\ubf^{k+1}$ in \eqref{BregmanforConstrainedI}, we have that
\begin{equation}\label{equivalentform}
  \ubf^{k+1} = \underset {\ubf} {\textrm{argmin}} \ \
  J(\ubf) + \frac{\lambda}{2}\left\|\Phibf\ubf - \ybf + \sum_{j=1}^{k}(\Phibf\ubf^{j} - \ybf)\right\|_2^2.
\end{equation}
Define $\bbf^k = \sum_{j=1}^k(\Phibf\ubf^k-\ybf)$, then we have
\begin{equation}\label{updateofb}
\bbf^{k+1} = \bbf^k + (\Phibf\ubf^{k+1}-\ybf), \ \ \bbf^0 = \zebf.
\end{equation}
With this, \eqref{equivalentform} becomes
\begin{equation} \label{updateofu}
  \ubf^{k+1} = \underset {\ubf} {\textrm{argmin}} \ \
  J(\ubf) + \frac{\lambda}{2}\|\Phibf\ubf - \ybf + \bbf^k\|_2^2.
\end{equation}
Combining \eqref{updateofu} and \eqref{updateofb} yields
\eqref{BregmanforConstrainedIII}. It is this form
\eqref{BregmanforConstrainedIII} that will be used to derive the
split Bregman iteration.

The convergence results of the Bregman iteration
\eqref{BregmanforConstrainedI} (or \eqref{BregmanforConstrainedIII})
were given in \cite{Goldstein2009}, \cite{Yin2008}. It was shown
that the sequence $\ubf^k$ generated by
\eqref{BregmanforConstrainedI} (or \eqref{BregmanforConstrainedIII})
weakly converges to a solution of \eqref{optimizationproblemC}.

\section*{Acknowledgment}
The authors thank the anonymous referees and the Associate Editor
for useful and insightful comments which have helped to improve the
presentation of this paper. Y. Liu would like to thank Deanna
Needell (with Claremont Mckenna College) and Haizhang Zhang (with
Sun Yat-sen University) for valuable discussions on the topic of
compressed sensing with general frames.

\spacingset{1}
\bibliographystyle{ieeetr}
\bibliography{Sparse Recovery with Coherent and Redundant Dictionaries.bbl}

\end{document}